\documentclass[a4paper]{article}

\usepackage[english]{babel}
\usepackage[utf8x]{inputenc}
\usepackage[T1]{fontenc}

\usepackage{framed}

\usepackage{color}

\usepackage{float}

\renewenvironment{shaded}{%
  \MakeFramed{\advance\hsize-\width \FrameRestore\FrameRestore}}%
 {\endMakeFramed}
\definecolor{mygray}{gray}{0.9}

\usepackage[a4paper,top=3cm,bottom=2cm,left=3cm,right=3cm,marginparwidth=1.75cm]{geometry}

\usepackage{amsmath}
\usepackage{amssymb}
\usepackage{amsthm}
\usepackage{proof}
\usepackage{graphicx}
\usepackage{subcaption}
\usepackage[colorinlistoftodos]{todonotes}
\usepackage[colorlinks=true, allcolors=blue]{hyperref}
\usepackage{tikz-cd}
\usepackage{wrapfig,lipsum}
\usepackage{authblk}

\newcounter{Counter}
\newtheorem{Theorem}[Counter]{Theorem}

\newtheorem{Lemma}[Counter]{Lemma}
\newtheorem{Corollary}[Counter]{Corollary}

\newcommand{\toffoli}{\mathrm{TOFFOLI}}
\newcommand{\cnot}{\mathrm{CNOT}}
\newcommand{\nnot}{\mathrm{NOT}}

\title{Explicit lower bounds on strong quantum simulation}
\author[1,2]{ Cupjin Huang}
\author[1,3]{Michael Newman}
\author[1]{Mario Szegedy}

\affil[1]{\small \textit{Aliyun Quantum Laboratory, Alibaba Group, Bellevue, WA 98004, USA}}
\affil[2]{\small \textit{Department of Computer Science, University of Michigan, Ann Arbor, MI 48109, USA}}
\affil[3]{\small \textit{Department of Mathematics, University of Michigan, Ann Arbor, MI 48109, USA}}
\date{}
\begin{document}
\maketitle
\begin{abstract}
We consider the problem of strong (amplitude-wise) simulation of $n$-qubit quantum circuits, and identify a subclass of simulators we call monotone.  This subclass encompasses almost all prominent simulation techniques.  We prove an {\em unconditional} (i.e.\ without relying on any complexity theoretic assumptions) and \emph{explicit} $(n-2)(2^{n-3}-1)$ lower bound on the running time of simulators within this subclass.  Assuming the Strong Exponential Time Hypothesis (SETH), we further remark that a universal simulator computing \emph{any} amplitude to precision $2^{-n}/2$ must take at least $2^{n - o(n)}$ time.  Finally, we compare strong simulators to existing SAT solvers, and identify the time-complexity below which a strong simulator would improve on state-of-the-art SAT solving.
\end{abstract}


\section{Introduction}

As quantum computers grow in size, so too does the interest in simulating them classically.  Intimately tied to our ability to simulate these circuits is the notion of \emph{quantum supremacy}, which tries to quantify the threshold beyond which quantum devices will outperform their classical counterparts \cite{preskill2012quantum,boixo2016characterizing}.  Furthermore, intermediate-size quantum devices need to be tested, a task that can be assisted by efficient simulation \cite{preskill2018quantum}.    

Consequently, there has been a growing body of literature aimed at improving, testing, and classifying existing simulation techniques \cite{boixo2016characterizing,boixo2017simulation,aaronson2017complexity,chen2018classical,chen2018,haner20170,pednault2017breaking,bravyi2016improved,nest2008classical}.  One major division within the landscape of quantum simulators is between strong and weak simulation.  Strong quantum simulators compute the amplitude of a particular outcome, whereas weak simulators only sample from the output distribution of a quantum circuit.

In this work we focus exclusively on strong simulation, and give compelling evidence that it is fundamentally unscalable.  This is hardly surprising: it is well-known that perfectly accurate strong simulation is $\#P$-hard.  However, we give \emph{explicit}, and in some contexts \emph{unconditional} evidence for the hardness of strong simulation.

We identify a large class of simulation techniques we call \emph{monotone}.  This class includes most of the known techniques for general quantum circuit simulation.  We place unconditional lower bounds on simulators within this class.  In particular, we show that there exists a simple quantum circuit which will take any such simulator at least $(n-2)(2^{n-3}-1)$ time to simulate.

Next, we consider general strong quantum simulators which allow for approximation.  Assuming the Strong Exponential Time Hypothesis, we remark that there exist simple quantum circuits for which any strong simulator with accuracy $2^{n}/2$ must take at least $2^{n - o(n)}$ time to simulate.  Finally, we compare strong simulators to state-of-the-art SAT solvers.  We identify parameter thresholds on strong simulators above which such a simulator would imply immediate gains on existing solvers.

While these bounds are concerned with algorithmic complexity rather than clever memory allocation and parallelization, they indicate that strong simulation of hundreds of qubits is fundamentally infeasible.  However, not all general circuit simulators are strong, nor are they all monotone.  An excellent counterexample is a pair of non-monotone simulators constructed by Bravyi and Gosset~\cite{bravyi2016improved} to simulate quantum circuits of low $T$-complexity $t$.  One is a strong simulator with a relevant exponential factor of $2^{0.47t}$, and the other is a weak simulator with an exponential factor of $2^{0.23t}$. 

While these particular simulators will not offer an advantage in simulating general quantum circuits, they illustrate the advantage of weak simulation. Our evidence for the fundamental hardness of strong simulation indicates that intrinsically different \emph{weak} simulators \emph{must} be the focus in order to scale up.




\section{Preliminaries}
\noindent {\bf Tensor Networks.} A \emph{tensor} is simply a multidimensional array. The dimension of the array is called the \emph{rank} of the tensor. The simplest examples of tensors are vectors and matrices, which are rank 1 and rank 2 tensors, respectively. Tensors are usually expressed graphically, see Figure~\ref{tensors}. For simplicity, we assume that each index ranges in $\{0,1\}$.

\begin{figure}[h!]
\centering
\begin{subfigure}[tc]{0.3\textwidth}
\centering
\includegraphics[width = \textwidth]{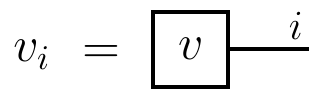}
\caption{A vector is a $1$-tensor.}
\end{subfigure}
~,
\begin{subfigure}[tc]{0.3\textwidth}
\centering
\includegraphics[width = \textwidth]{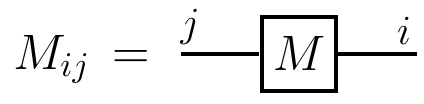}
\caption{A matrix is a $2$-tensor.}
\end{subfigure}
~,
\begin{subfigure}[tc]{0.3\textwidth}
\centering
\includegraphics[width = \textwidth]{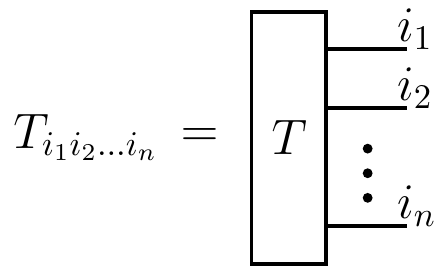}
\caption{An $n$-tensor.}
\end{subfigure}
\caption{Examples of tensors.}
\label{tensors}
\end{figure}

A \emph{tensor network} is a collection of tensors together with identifications among the open indices of the tensors. Throughout the paper, we restrict ourselves to \emph{closed} tensor networks. These are tensor networks representing a scalar obtained by summing over all identified open indices. See Figure~\ref{tensornet} for an example. The \emph{shape} of a tensor network is the information given by the ranks of each tensor, together with the identifications of the open indices.  In particular, it is the information which tells you how to contract the tensor network, but not what the values of the tensors are.

\begin{figure}[h!]
\centering
\includegraphics[width = 0.7\textwidth]{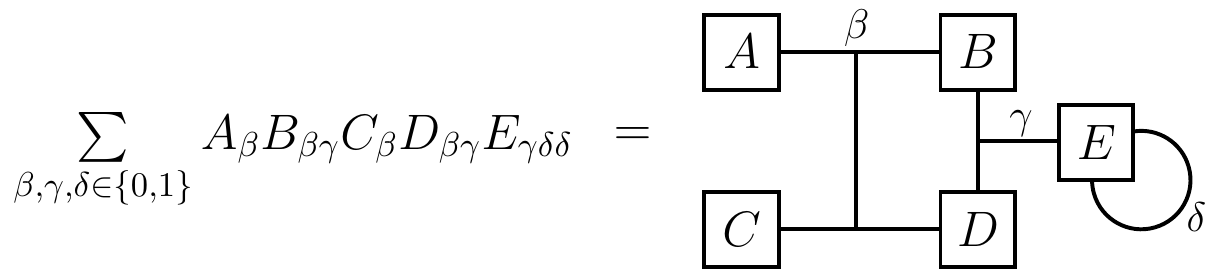}
\caption{Example of a (closed) tensor network.}
\label{tensornet}
\end{figure}


\medskip

\noindent{\bf Monotone Arithmetic Circuits.} In a monotone arithmetic circuit, each gate has fan-in degree $2$ and is either a $+$-gate or a $\times$-gate. Furthermore, we assume that there is a single output node, see Figure~\ref{mac} for an example.

\begin{figure}[h!]
\centering
\includegraphics[width = 0.4\textwidth]{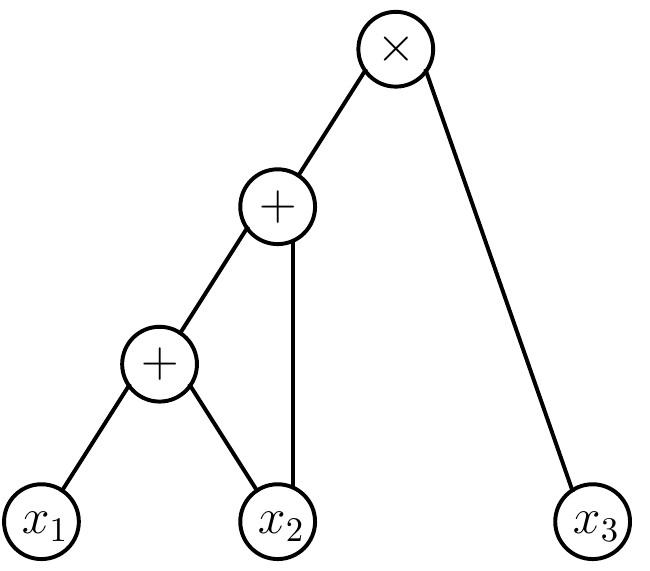}
\caption{Example of a monotone arithmetic circuit computing the polynomial $(x_1+2x_2)x_3$.}
\label{mac}
\end{figure}

\noindent{\bf Quantum Circuits.} In a \emph{quantum circuit} $C$, each gate has the same in- and out-degree, and represents a unitary matrix. The width of a quantum circuit $w(C)$, is the number of input nodes to the circuit.  A special case of a quantum circuit is a \emph{classical reversible} circuit.

\medskip

\noindent{\bf Strong and Weak Simulation.}  A \emph{strong simulation} of the quantum circuit $C$ computes the value $\langle 0|C|x\rangle$ for some specified $x$, and a \emph{weak simulation} of the circuit samples from the distribution $p(x)=|\langle 0|C|x\rangle|^2$.

\section{An unconditional lower bound for monotone methods}

When proving a lower bound, the model is at least as important as the bound itself.  If the model is too restrictive, then the lower bound loses its worth.  In this section, we introduce the \emph{monotone method}, which describes a strong quantum simulation methodology.  Although the model is restrictive enough to permit \emph{unconditional} lower bounds, it also encompasses the majority of existing strong simulation techniques.

Before we define a monotone method rigorously, we describe it informally as a game. The game is played between a referee and you, and the aim of the game is to simulate a quantum circuit. First, the referee hands you a picture of a quantum circuit, but with some information missing.  He has erased all of the nonzero coefficients in the gates appearing, replacing each with a different variable. You are allowed to spend as long as you like preparing your strategy; you may even use infinite time to do so.  

When you are finally ready, you must commit to a \emph{monotone} arithmetic circuit, with inputs given by the variables.  Your monotone arithmetic circuit must simulate the original circuit with perfect accuracy \emph{no matter what values the variables take}, and the time-complexity of your task is measured only by the length of your arithmetic circuit. With this game in mind, we set out to define a monotone method.

\subsection{The skeleton of a tensor network}
In the game we describe, the partial information about the circuit is very specific.  Informally, we call the picture that the referee gives us the \emph{skeleton} of the quantum circuit.  More generally, we can consider the quantum circuit as a closed tensor network, and define an associated skeleton.

\medskip

\noindent{\bf Skeleton of a Tensor Network.} The \emph{skeleton} of a tensor network is the hypergraph associated to the tensor network along with the locations of the nonzero entries in each tensor.  Furthermore, we call the skeleton \emph{closed} if the hypergraph is closed.

\medskip

We further define the (closed) skeleton of a quantum circuit as the skeleton of its associated tensor network. To any \emph{closed} skeleton, we can define an associated polynomial.

\medskip

\noindent{\bf Associated Polynomial.} Given a closed skeleton $S$, we can associate to it a polynomial \emph{$p(S)$} in the following way. First, replace each nonzero entry in each tensor appearing with a different variable.  We can then regard $S$ as a closed tensor network.  Define $p(S)$ to be the polynomial obtained from contracting $S$. 

\medskip

See Figure~\ref{skeleton} for an example of a quantum circuit, its skeleton, and its associated polynomial.  Note that $p(S)$ is independent of the sequence of contractions and has nonnegative coefficients.  To any quantum circuit $C$, we can also define the polynomial $p(C)$ to be the polynomial associated to its skeleton.

\begin{figure}[h]
\centering

\begin{subfigure}[tc]{0.48\textwidth}
\centering
\includegraphics[width = \textwidth]{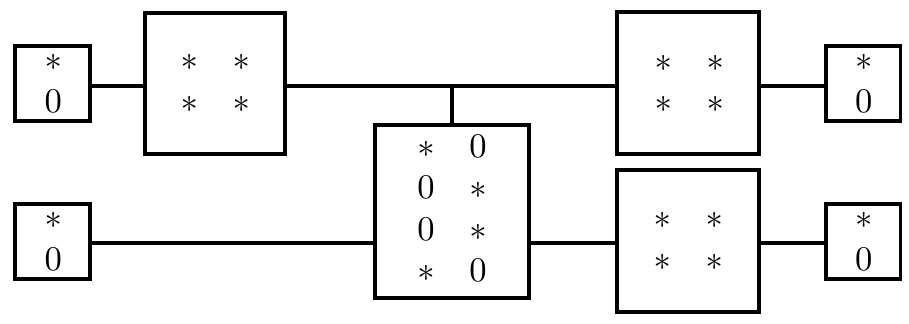}
\end{subfigure}
,\hspace{0.1cm}
\begin{subfigure}[tc]{0.48\textwidth}
\centering
\includegraphics[width = \textwidth]{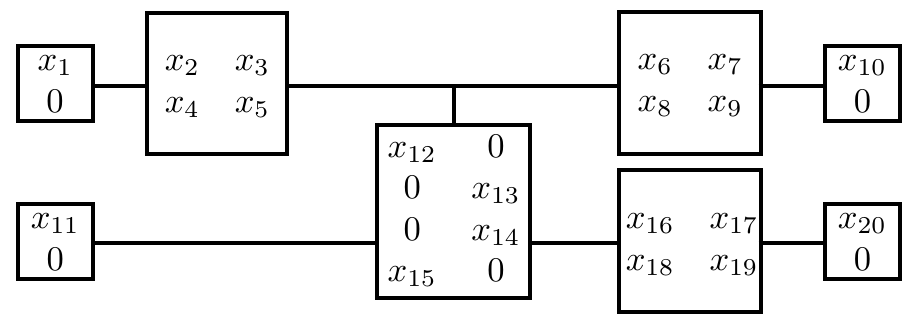}
\end{subfigure}
\caption{The left-hand diagram represents the skeleton associated to the quantum circuit $\langle 0 | H_1H_2\otimes CX_{1\rightarrow2}\otimes H_1\otimes I_2|0\rangle$, where the subscripts indicate the wires on which the gates act.  According to the variables introduced in the right-hand diagram, its associated polynomial is $p(x_1, \ldots, x_{20}) = x_1x_{10}x_{11}x_{20}(x_2x_6x_{12}x_{16} + x_3x_8x_{14}x_{18})$.}
\label{skeleton}
\end{figure}

\subsection{Monotone methods}

We will now define a monotone method.  For any arithmetic circuit $A$, let $p(A)$ be its associated polynomial.

\medskip

\noindent{\bf Monotone Method.} A \emph{monotone method} $M$ is a mapping from closed skeletons to \emph{monotone} arithmetic circuits that preserves the associated polynomials: for all closed skeletons $S$, $p(M(S)) = p(S)$.  We define the \emph{monotone complexity} of the skeleton $S$ under the mapping $M$ as $|M(S)|$. We can further extend a monotone method to a mapping on quantum circuits through their associated skeletons.  

\medskip

In particular, $M$ itself can take exponential time to compute or even be non-uniform, allowing for uncomputable strategies.  The complexity is measured only in terms of the complexity of the resulting arithmetic circuit. See Figure~\ref{high_level} for a high-level description relating strong quantum simulators to monotone methods.

\begin{figure}[htb!]
\begin{center}
\includegraphics[width = 0.9 \textwidth]{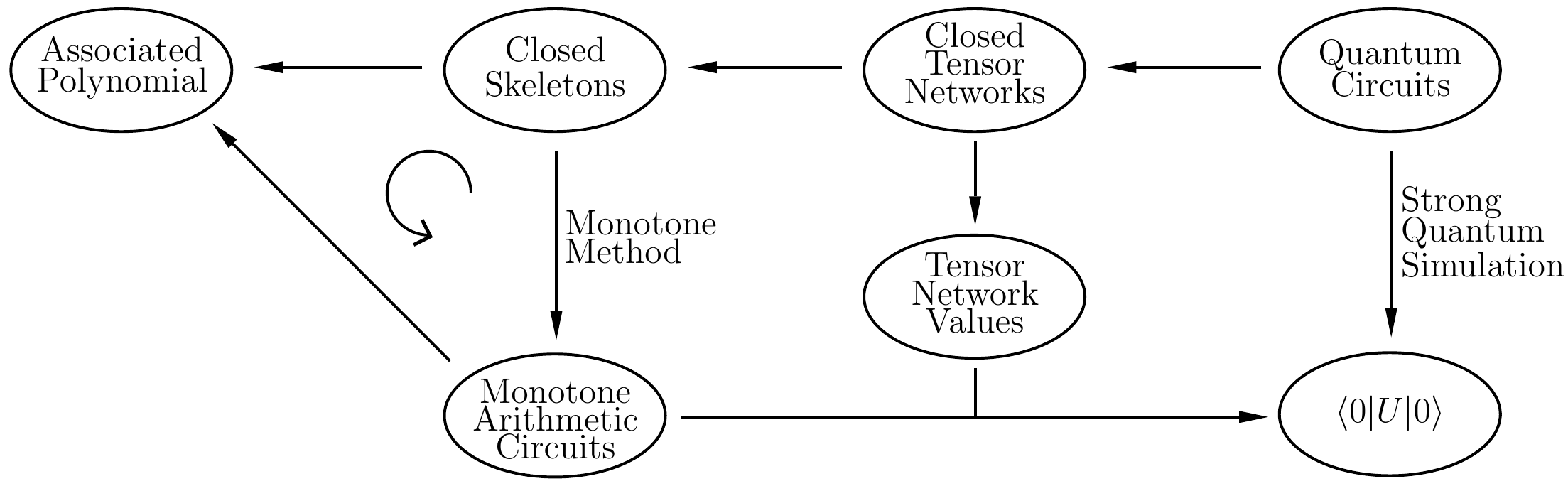}
\end{center}
\caption{The relation between strong quantum simulation and monotone methods.  A monotone method is any map from closed skeletons to monotone arithmetic circuits that makes the diagram commute.}
\label{high_level}
\end{figure}

To demonstrate the applicability of this model, we give a list of prominent strong simulation techniques, and show that they are all monotone methods.

\begin{enumerate}

\item Feynman's path integral~\cite{feynman1965quantum} is the simplest example of a monotone method applied to general tensor network contraction, involving only additions and multiplications of \emph{all} coefficients appearing in the circuit.

\item Markov and Shi~\cite{markov2008simulating} propose a tensor network contraction algorithm consisting of two phases. In the first phase, a nearly optimal ordering of contractions is decided. In the second phase, the tensor network is contracted accordingly. Given a general tensor network, finding the optimal ordering of contractions is an NP-hard problem; although there may be a contraction ordering with low complexity, finding it may take exponential time. One could carefully choose the tradeoff between these two phases so that the total complexity is split evenly. 

In the second phase, contracting the tensors one by one is realized by a monotone arithmetic circuit. In particular, the algorithm itself is a monotone method.  The complexity of \emph{the second phase alone} is counted in the complexity of our model.

\item Several works~\cite{boixo2017simulation, pednault2017breaking, chen2018classical, chen2018}  preprocess the quantum circuit in a way that simplifies the tensor network, and ultimately results in a lower complexity during contraction. Specifically, they reduce the complexity of contracting diagonal gates. This preprocessing is oblivious to the nonzero coefficients of the gates.  After deciding an order of contractions, the resulting procedure is realized by a monotone arithmetic circuit with respect to the non-zero entries.  The preprocessing step is illustrated in Figure~\ref{fig:prep}.

\begin{figure}[h!]
\centering
\begin{subfigure}[tc]{0.4\textwidth}
\centering
\includegraphics[width = \textwidth]{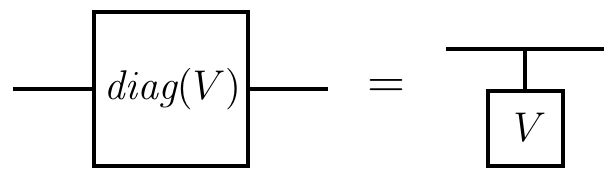}
\caption{Simplification of a one-qubit diagonal gate.}
\end{subfigure}
~
\begin{subfigure}[tc]{0.4\textwidth}
\centering
\includegraphics[width = \textwidth]{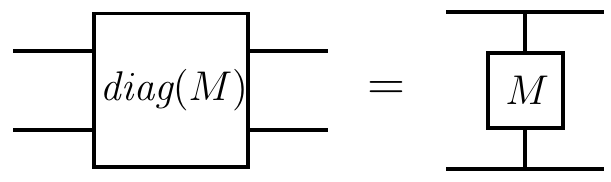}
\caption{Simplification of a two-qubit diagonal gate.}
\end{subfigure}
\\

\vspace{0.2cm}

\begin{subfigure}[tc]{0.8\textwidth}
\centering
\includegraphics[width = \textwidth]{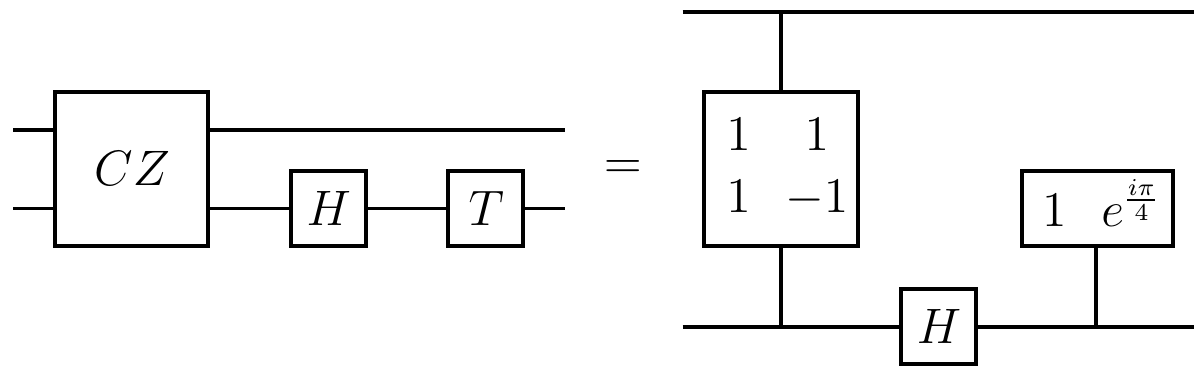}
\caption{Naively contracting the left tensor network takes $32+8=40$ multiplication steps. Since the $CZ$ and $T$ gates are both diagonal, we can simplify the tensor network. The network on the right reduces the number of multiplications to $8+4=12$.}
\end{subfigure}
\caption{Preprocessing a tensor network to reduce the cost of contraction.}
\label{fig:prep}
\end{figure}

\item In~\cite{aaronson2017complexity}, Aaronson and Chen propose a family of algorithms suited to different time-space trade-offs. These algorithms interpolate between naively evaluating the circuit, and implementing a variation of Savitch's theorem. Given a quantum circuit, one can decide the optimal algorithm according to the space and time constraints.  Whichever algorithm you choose, the resulting process can be described by a monotone arithmetic circuit.  
\end{enumerate}

\medskip

There are many more techniques that qualify as monotone methods, such as clever sparse matrix multiplication.  However, there are also many tricks which do \emph{not} fit into the monotone method.  There exist tensor contractions which are non-monotone, first augmenting the network to introduce time-saving cancellations.  Most famous among these is Strassen's fast matrix multiplication algorithm \cite{strassen1969gaussian}.  Recognizing that a circuit belongs to a restricted family such as Clifford or matchgate circuits can give exponential time savings \cite{gottesman1998heisenberg, valiant2002quantum, bravyi2016improved,aaronson2004improved,bravyi2016trading}.  Even recognizing small circuit identities may save on complexity \cite{pednault2017breaking}, although this problem may be hard in general \cite{ji2009non}. Finally, monotone methods are oblivious to the unitarity of the gates, which might be taken advantage of. However, for general quantum circuit simulation, we emphasize that the majority of savings are performed by tricks that fit within the monotone method framework.


\subsection{An unconditional lower bound}

We will now prove an \emph{unconditional} lower bound on the time-complexity of a strong quantum simulator that uses a monotone method.  The proof concept is simple: we will repurpose the simulator to evaluate the permanent with constant overhead.  Using existing unconditional lower bounds on monotone circuit evaluation of the permanent~\cite{jerrum1982some}, we can impose unconditional lower bounds on the time-complexity of the simulator.

\medskip

\noindent {\bf Permanent.} The permanent of an $n \times n$ matrix $M$ is given by $\sum_{\sigma \in S_n}\prod_{i=1}^n M_{i,\sigma(i)}$.  We note that there have been several works relating hard polynomial problems to the hardness of quantum circuits \cite{montanaro2017quantum, bouland2018quantum}, and the permanent is a particularly prominent example \cite{rudolph2009simple, aaronson2011computational}.

\begin{Theorem}[Jerrum \& Snir] \label{Jerrum}
A monotone arithmetic circuit computing the permanent of any $n \times n$ matrix must have size at least $n(2^{n-1}-1)$.
\end{Theorem}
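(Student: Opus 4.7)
The plan is to replicate the classical Jerrum--Snir argument, exploiting the fact that no cancellations occur in monotone arithmetic computations. The starting observation is that any monomial appearing at an intermediate gate of a monotone circuit computing $\mathrm{perm}(X)$ must be a ``sub-product'' of some permutation monomial $\prod_i x_{i,\sigma(i)}$; otherwise it would persist through the no-cancellation composition and spoil the output.

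First I would establish a structural lemma: for each gate $v$ in such a circuit, one may associate a fixed row set $R_v \subseteq [n]$ such that the polynomial computed at $v$ has the form $p_v = \sum_{\tau} c_\tau \prod_{i \in R_v} x_{i,\tau(i)}$, where $\tau$ ranges over injections $R_v \hookrightarrow [n]$ and every $c_\tau \geq 0$. The proof is structural induction on the circuit: variable leaves $x_{ij}$ have $R_v = \{i\}$; addition gates require the two inputs to agree on their row sets; and multiplication gates $v = u_1 \times u_2$ force $R_{u_1} \sqcup R_{u_2} = R_v$ as a disjoint union, since any overlap would produce monomials with a repeated variable, none of which divides any permutation monomial.

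With this structure in hand, I would prove by induction on $k = |R|$ that any monotone circuit computing the ``partial permanent'' $P_R := \sum_{\tau: R \hookrightarrow [n]} \prod_{i \in R} x_{i,\tau(i)}$ has size at least $k(2^{k-1} - 1)$. The inductive step examines the topmost gate of such a subcircuit. If it is a multiplication gate, the row set $R$ partitions nontrivially into $R_1 \sqcup R_2$, and the two subcircuits must realize enough distinct partial permanents on $R_1$ and $R_2$ to span all column-pairings contributing to $P_R$. If it is an addition gate, the polynomial splits as a sum of smaller partial-permanent-like pieces sharing the same row set, and both halves are attacked by the inductive hypothesis. Either way the induction closes after a careful gate count, and the base case $k=1$ is a straightforward linear-form argument.

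The main obstacle is the counting in the multiplication case: distinct partial permanents on a common row set may share subcircuits, so one must aggregate contributions of gates by their label $R_v$ rather than double-counting. The bound then follows from a binomial-style summation indexed by the row sets $\emptyset \neq R \subseteq [n]$; getting the constant exactly $n(2^{n-1}-1)$ rather than merely the right order of magnitude is the delicate part, and is what makes the result sharp.
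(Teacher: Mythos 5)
The paper does not prove this statement: Theorem~\ref{Jerrum} is stated with attribution and cited from \cite{jerrum1982some} as an external result, so there is no in-paper proof to compare against. That said, your reconstruction of the Jerrum--Snir argument has two genuine gaps.

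First, your structural lemma is true, but not for the reason you give. You cannot establish that ``addition gates require the two inputs to agree on their row sets'' by a bottom-up structural induction: nothing in the induction forces the two summands to have matching row sets. The correct argument runs top-down, from the output. Take a pruned circuit (no gate computes the zero polynomial, every gate has a path to the output). If a gate $v$ carries monomials $m$ and $m'$, pick any path from $v$ to the output and, at each multiplication along it, pick any monomial of the other operand; multiplying these out shows that both $m$ and $m'$, times a common monomial $\mu$, appear in the output. Since the output is the permanent, both products are multilinear with row set $[n]$, hence $\mathrm{rows}(m)=[n]\setminus\mathrm{rows}(\mu)=\mathrm{rows}(m')$. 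That is the no-cancellation argument that forces a single row set per gate.

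Second, and more seriously, the inductive scheme on the partial permanents $P_R=\sum_{\tau:R\hookrightarrow[n]}\prod_{i\in R}x_{i,\tau(i)}$ does not close. Intermediate gates of a monotone circuit for $\mathrm{perm}_n$ do not compute polynomials of the form $P_R$: by your own structural lemma they compute \emph{some} nonnegative combination of injections with a fixed row set $R_v$, but the support and coefficients are arbitrary (for instance, the natural dynamic program computes $\sum_{\tau:\{1,\dots,k\}\to C \text{ bijective}}\prod_i x_{i,\tau(i)}$ for a fixed column set $C$, not $P_{\{1,\dots,k\}}$). In the addition case, writing $P_R=p_1+p_2$ gives two summands with the \emph{same} row set $R$, so the inductive hypothesis on $|R'|<|R|$ never applies. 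In the multiplication case, $P_R$ for $|R|\ge 2$ does not factor as a product of two positive-coefficient polynomials of lower degree at all, because such a product necessarily contains cross terms with a repeated column (already visible for $|R|=2$: $\bigl(\sum_i a_i x_{r_1,i}\bigr)\bigl(\sum_j b_j x_{r_2,j}\bigr)$ always produces $a_ib_i\,x_{r_1,i}x_{r_2,i}$ unless one factor is zero). So the topmost gate of a circuit for $P_R$ must be an addition gate, and the induction stalls immediately. The Jerrum--Snir bound is instead obtained by a parse-tree/counting argument that charges, per permutation monomial, the multiplication gates encountered and aggregates over the laminar family of row sets; your framing would need to be replaced by that kind of global count rather than a row-set induction.
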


First, we construct a hard family of skeletons, parametrized by $n$.  This family is hard in the sense that a monotone restriction of its associated polynomial computes the permanent of an $n \times n$ matrix.  We then use Theorem~\ref{Jerrum} to lower bound the time-complexity of any monotone method applied to this skeleton.  Finally, we find one (among many) quantum circuits with this skeleton.  We conclude the following.

\begin{Theorem} 
There exists a quantum circuit $C$ of width $n+2$ and depth $3n^2 + 1$ such that for any monotone method $M$, $|M(C)|\geq n(2^{n-1}-1)$.
\label{theo:uncon}
\end{Theorem}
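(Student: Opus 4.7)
The plan is to reduce to Theorem~\ref{Jerrum}. Given a monotone method $M$, the circuit $M(C)$ is a monotone arithmetic circuit computing the polynomial $p(C)$ in the variables corresponding to non-zero gate entries. Substituting non-negative constants for these variables preserves both monotonicity and circuit size, so any monotone restriction of $p(C)$ is computed by a monotone arithmetic circuit of size at most $|M(C)|$. It therefore suffices to exhibit a quantum circuit $C$ of width $n+2$ and depth $3n^2+1$ and a monotone restriction under which $p(C)$ becomes $\mathrm{perm}(N)$ for an $n \times n$ matrix $N$ of fresh indeterminates. Applying Theorem~\ref{Jerrum} then yields $|M(C)| \ge n(2^{n-1}-1)$.

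\textbf{The construction.} I would use $n+2$ qubits: $n$ \emph{column} qubits $c_1, \ldots, c_n$, one \emph{flag} qubit $f$, and one \emph{workspace} qubit $w$. The circuit consists of $n$ row-subcircuits of depth $3n$ each, together with one boundary gate contributing the extra unit of depth (for instance to initialize or uncompute $w$). For each pair $(i,j)$, a depth-$3$ gadget on the qubits $(c_j, f, w)$ realizes a ``weighted pick'' of column $j$ in row $i$ via the compute--rotate--uncompute pattern: a $\toffoli$-like gate writes the indicator ``$c_j=f=0$'' into $w$; a gate conditional on $w=1$ performs a rotation on $(c_j, f)$ within the $\{|00\rangle,|11\rangle\}$ subspace, with ``pick amplitude'' equal to the fresh variable $N_{ij}$; a third gate uncomputes $w$. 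Because the $n$ gadgets of a row share the qubits $f$ and $w$, they must be sequentialized, giving depth exactly $3n$ per row; stacking $n$ rows plus the boundary gate yields total depth $3n^2+1$.

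\textbf{Polynomial analysis and main obstacle.} Each path in the Feynman expansion of $p(C)$ is a classical trajectory through the tensor network, contributing the product of the non-zero entries traversed. Choosing the monotone restriction carefully---setting the ``no-pick'' and ``unpick'' amplitudes to $0$ while keeping the identity entries of the unitarity-padding at $1$---one arranges that the only surviving trajectories correspond to permutations $\sigma \in S_n$: row $i$ picks column $\sigma(i)$ contributing $N_{i,\sigma(i)}$, and distinct rows pick distinct columns. Summing over such paths recovers $\mathrm{perm}(N)$. The main obstacle is that the skeleton of each gadget must simultaneously (a) be realizable by honest unitary gates, which forces extra non-zero entries beyond the ``pick'' entry alone, and (b) admit a monotone restriction that cleanly isolates the permanent. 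In particular, when the $(i,j)$-gadget is applied to a basis state with $c_j$ already set by a previous row's pick and $f$ already set by an earlier pick within row $i$ (i.e.\ the input state $(c_j,f,w)=(1,1,0)$), some spurious trajectories arise; one must verify that these are either absent from the skeleton or killed by the restriction, without inflating the width or depth budget. Arguing this consistently across all $(i,j)$ is the technical heart of the proof.
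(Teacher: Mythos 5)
Your high-level reduction is exactly right and matches the paper's: exhibit a quantum circuit $C$ whose associated polynomial $p(C)$, under a monotone substitution (setting some variables to $0$ and $1$, others to fresh indeterminates $x_{ij}$), becomes $\mathrm{perm}((x_{ij}))$; then any monotone arithmetic circuit computing $p(C)$ yields one computing the permanent of no greater size, so Theorem~\ref{Jerrum} gives the bound. That framing is the correct and complete logical content of the reduction.

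However, the gadget you propose is different from the paper's (which uses a grid of small Hadamard- and CNOT-shaped tensors, verified pictorially via a bijection between surviving wire-labelings and permutations), and the part you flag as the ``technical heart'' is not a routine verification --- as described, your construction does not compute the permanent. Three concrete problems: (1)~\emph{The flag qubit is never reset.} After row~$1$ picks, $f=1$, so the $\toffoli$ condition ``$c_j=f=0$'' fails for every gadget in rows $2,\dots,n$. With the ``no-pick'' amplitude set to $0$, the only surviving trajectories through rows $2,\dots,n$ pass through the identity blocks (the rotation never fires because $w$ stays $0$), so $p$ restricts to $\sum_j x_{1j}$, a single row sum, not $\mathrm{perm}$. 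Adding a reset changes the row structure, and this must be reconciled with the depth budget. (2)~\emph{The uncompute of $w$ is ill-defined.} Because the controlled rotation moves $(c_j,f)$ from $(0,0)$ to $(1,1)$, there is no single gate on $(c_j,f,w)$ that returns $w$ to $0$ on both branches of the pick superposition without also corrupting the ``Toffoli-never-fired'' branches (e.g.\ input $(1,1,0)$ would get $w$ set to $1$ if the uncompute is ``flip $w$ iff $c_j=f$''). (3)~\emph{Boundary mismatch.} The skeleton is closed with $\langle 0|\cdots|0\rangle$ boundary tensors, each having a single nonzero entry at index $0$. A surviving permutation trajectory leaves all $n$ column qubits at $1$, so it is annihilated by the output boundary. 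The paper sidesteps this by placing Hadamard-shaped tensors at the boundaries, whose skeletons have nonzero entries at both indices; your design needs an analogous device, which again costs depth. These are not edge cases to be checked at the end --- they are the reason the gadget has to be engineered carefully, and as written the construction does not go through.
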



We want to re-emphasize that this lower bound is explicit and unconditional.  If God herself were monotone, then this is how long it would take her to simulate these circuits.\footnote{However, we strongly suspect that God is \emph{not} monotone.}

\begin{proof}[Proof of Theorem~\ref{theo:uncon}]
Consider the skeleton $S$ defined in the figure below.


\begin{figure}[H]
\centering
\begin{subfigure}[t]{0.4\textwidth}
\includegraphics[width = \textwidth]{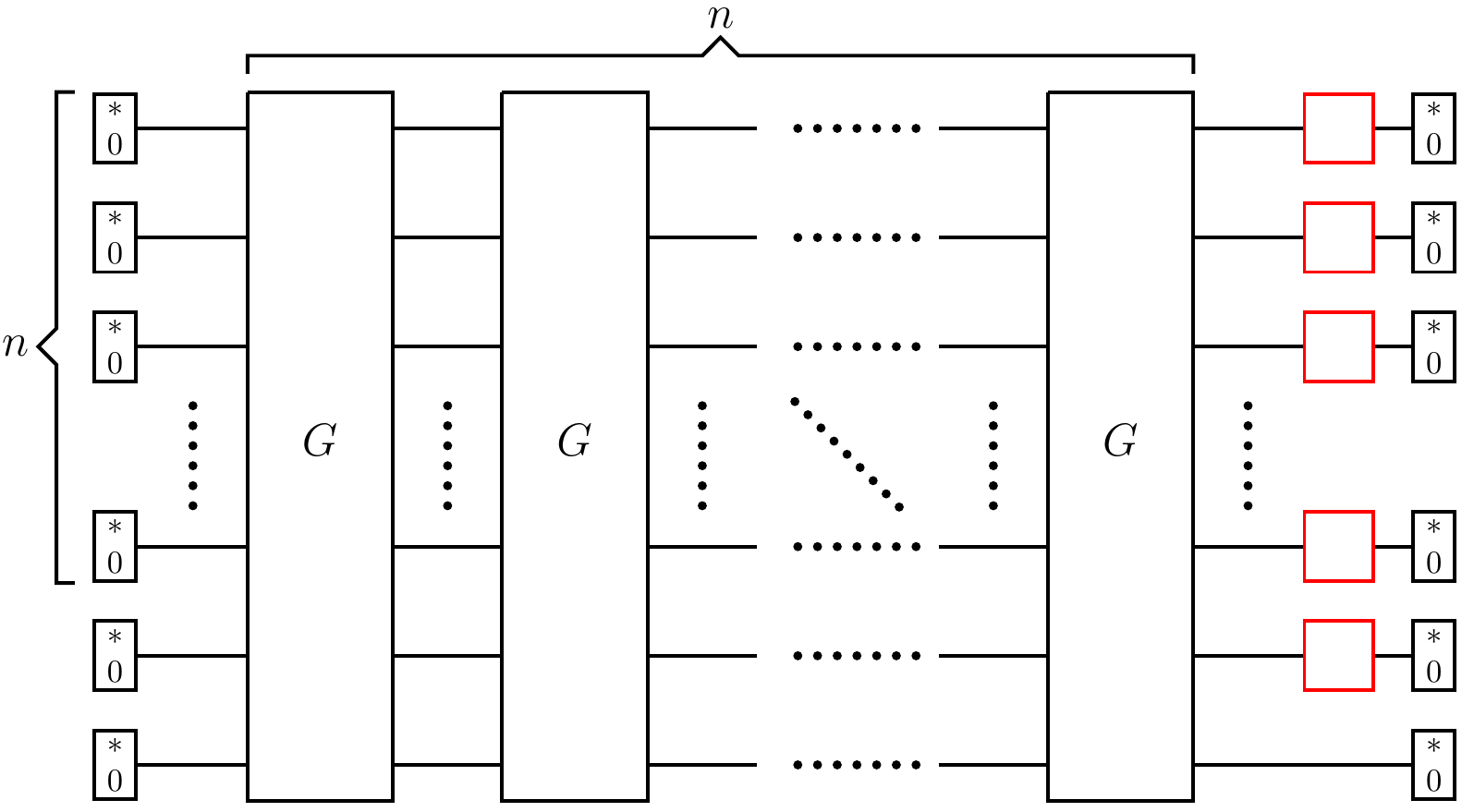}
\caption{Skeleton defined in terms of $G$.}
\label{fig:perm_2}
\end{subfigure}
~
\begin{subfigure}[t]{0.55\textwidth}
\includegraphics[width = \textwidth]{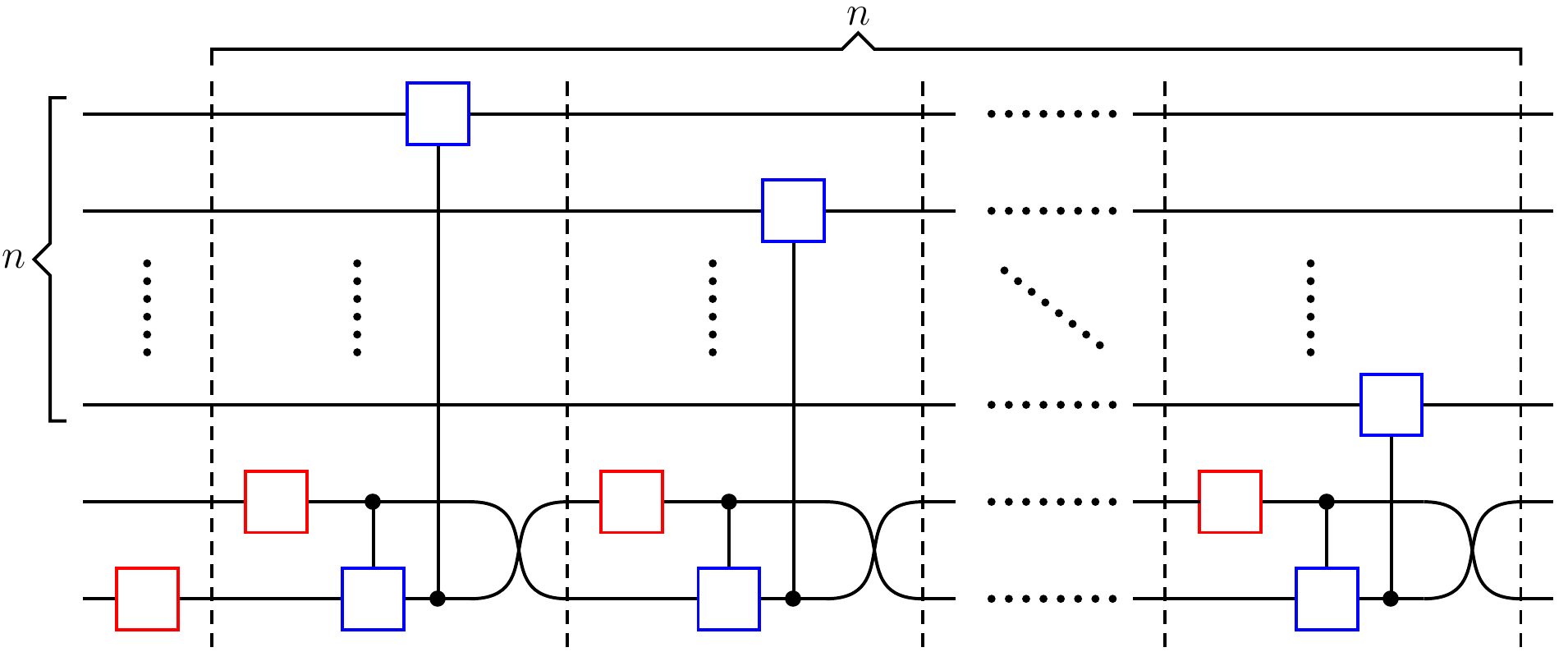}
\caption{Gadget $G$ used in Figure~\ref{fig:perm_2}.}
\label{fig:perm_1}
\end{subfigure} \\ \vspace{0.2cm}
\begin{subfigure}[tc]{0.3\textwidth}
\centering
\includegraphics[width = \textwidth]{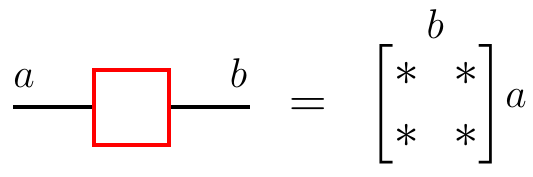}
\caption{
\label{fig:perm_3}}
\end{subfigure}
\hspace{1.0cm}
\begin{subfigure}[tc]{0.3\textwidth}
\centering
\includegraphics[width = \textwidth]{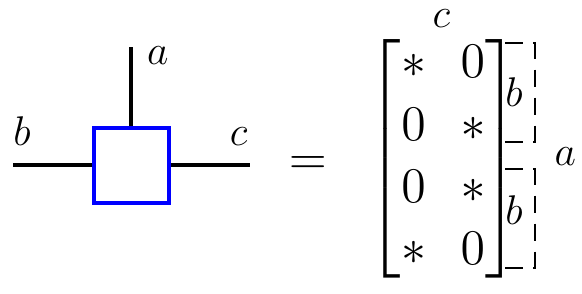}
\caption{
\label{fig:perm_4}}
\end{subfigure}
\caption{The skeleton we use to prove Theorem~\ref{theo:uncon}. The whole circuit is depicted in (a), and the gadget $G$ is depicted in (b).
The nonzero locations of the red component are shown in (c) and the nonzero locations of the blue component are shown in (d).
To obtain a quantum circuit, we can replace the red components with 
Hadamard gates and the blue components with CNOT gates.}
\label{fig:C_perm}
\end{figure}

The polynomial $p(S)$ defined by the skeleton in Figure~\ref{fig:C_perm} is not the permanent. However, we can replace the variables of $p(S)$
with a combination of $x_{ij}$, $0$, and $1$ so that it becomes the permanent of the matrix $(x_{ij})$. Thus, if there were 
a monotone arithmetic circuit computing $p(S)$, then a potentially smaller circuit would compute the permanent. The replacements are shown in Figure~\ref{fig:T_perm}.  The pictorial proof that the resulting circuit computes the permanent can be found in Figure \ref{prooffig}.

\begin{figure}[H]
\centering
\begin{subfigure}[h!]{0.4\textwidth}
\includegraphics[width = \textwidth]{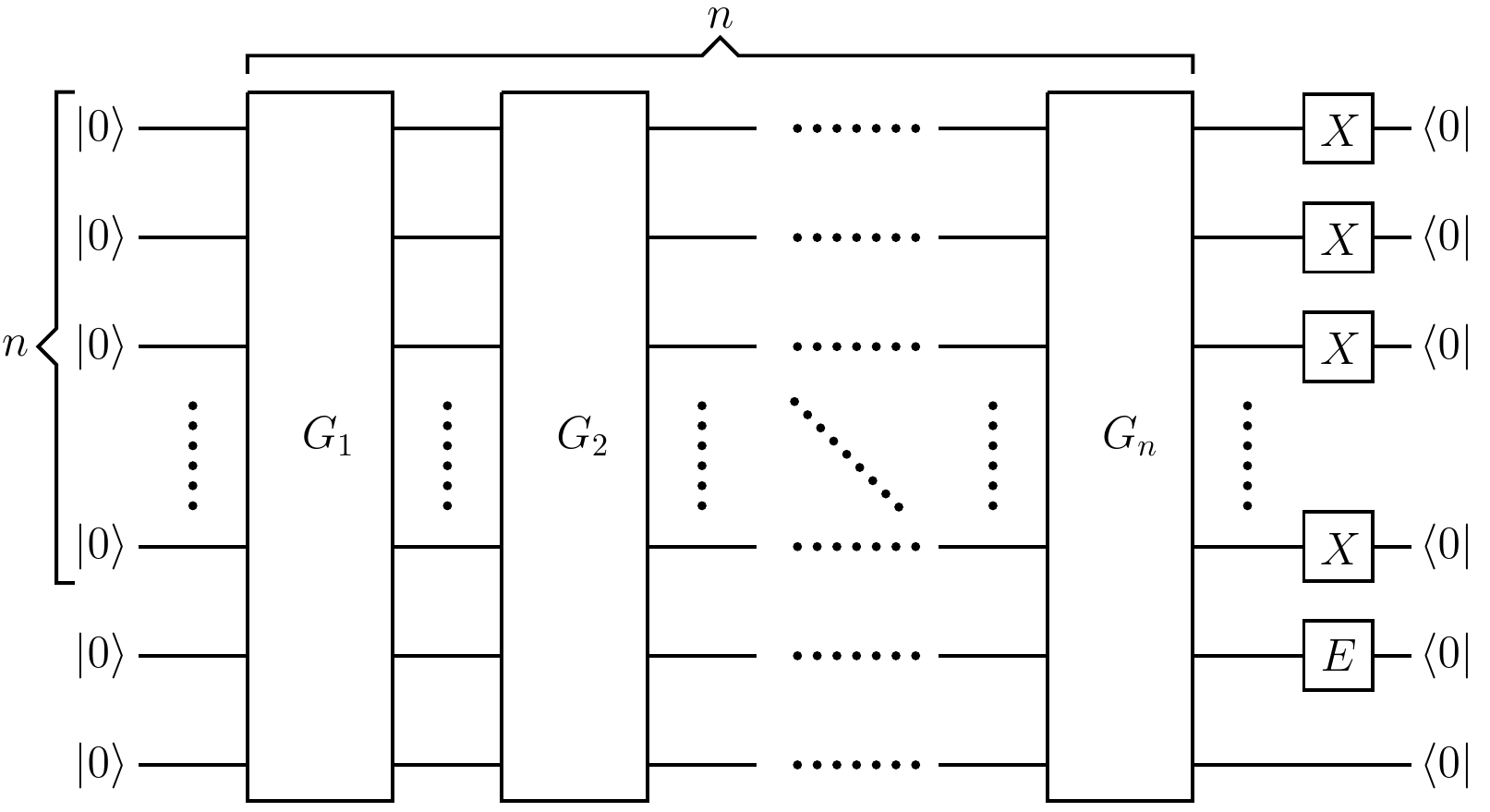}
\caption{Tensor network $T_{perm}$.}
\label{fig:perm_4}
\end{subfigure}
~
\begin{subfigure}[h!]{0.55\textwidth}
\includegraphics[width = \textwidth]{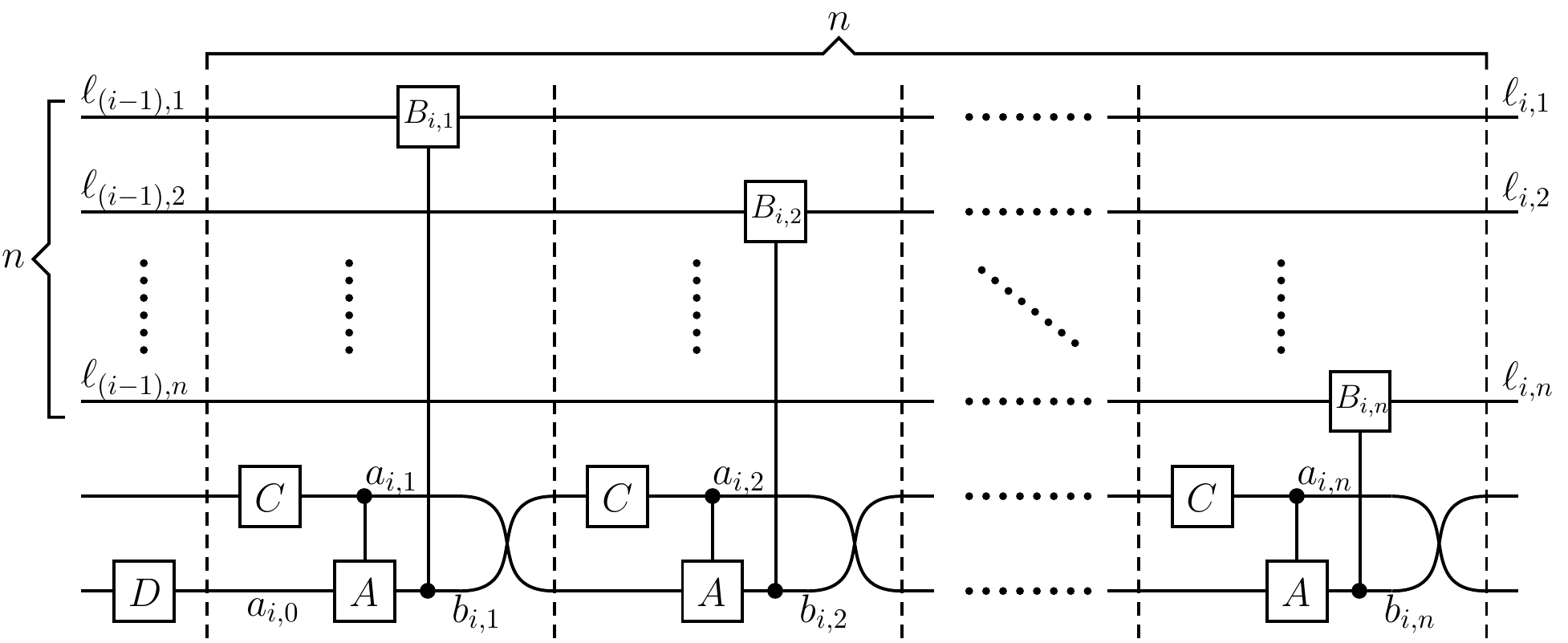}
\caption{Gadget $G_i$ in Figure~\ref{fig:perm_4}.}
\label{fig:perm_3}
\end{subfigure}

\centering
\begin{subfigure}[h!]{0.3\textwidth}
\centering
\includegraphics[width = \textwidth]{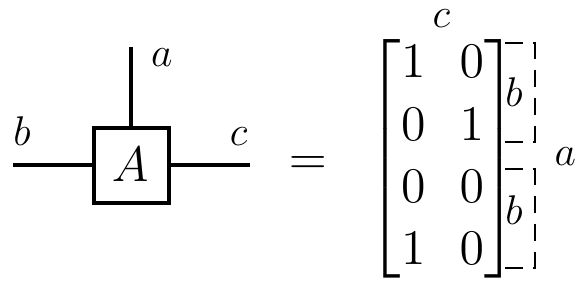}
\end{subfigure}
~,
\begin{subfigure}[h!]{0.3\textwidth}
\centering
\includegraphics[width = \textwidth]{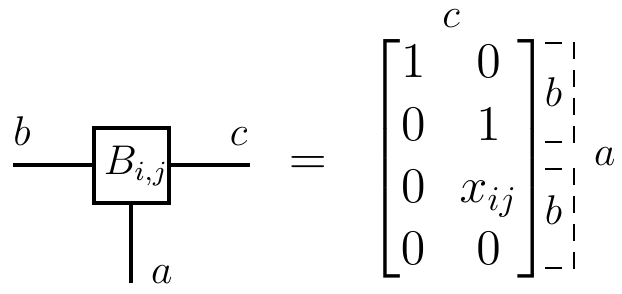}
\end{subfigure}
~,
\begin{subfigure}[h!]{0.3\textwidth}
\centering
\includegraphics[width = \textwidth]{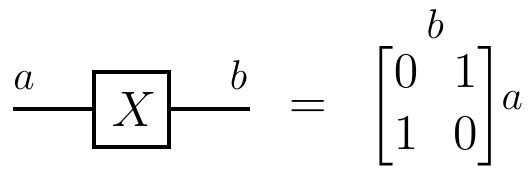}
\end{subfigure}
,\\
\begin{subfigure}[h!]{0.3\textwidth}
\centering
\includegraphics[width = \textwidth]{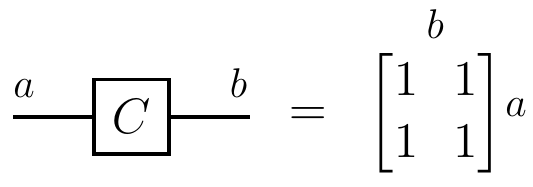}
\end{subfigure}
~,
\begin{subfigure}[h!]{0.3\textwidth}
\centering
\includegraphics[width = \textwidth]{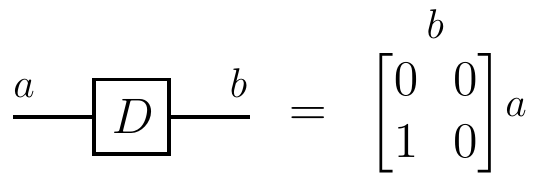}
\end{subfigure}
~,
\begin{subfigure}[h!]{0.3\textwidth}
\centering
\includegraphics[width = \textwidth]{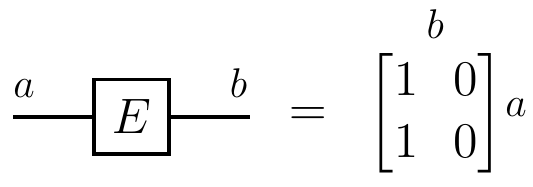}
\end{subfigure}
\caption{The tensor network $T_{perm}$ that realizes the permanent. \label{fig:T_perm}}
\end{figure}

\begin{figure}[H]
\centering
\begin{subfigure}[h!]{0.9\textwidth}
\includegraphics[width = \textwidth]{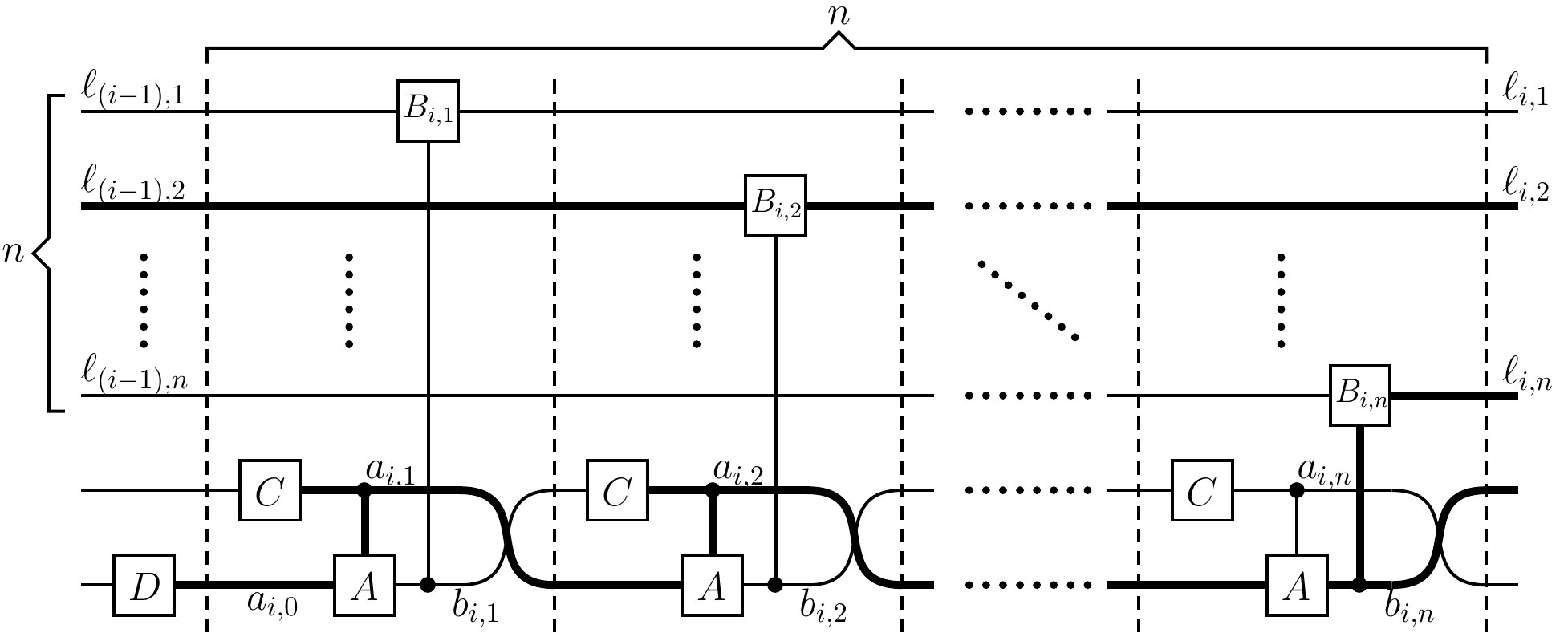}
\end{subfigure} 
\\ \vspace{0.8cm}
\centering
\begin{subfigure}[h!]{0.6\textwidth}
\centering
\includegraphics[width = \textwidth]{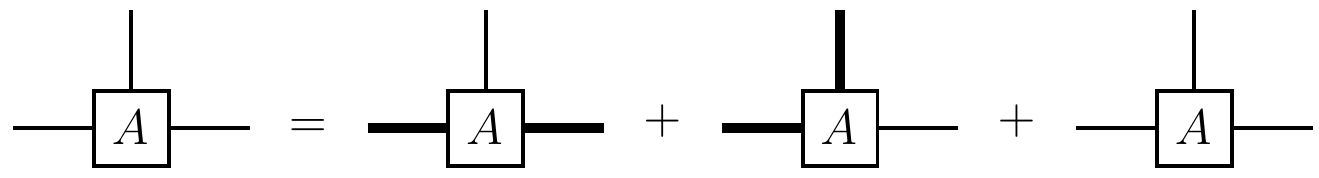}
\end{subfigure}
\\  \vspace{0.5cm}
\begin{subfigure}[h!]{0.6\textwidth}
\centering
\includegraphics[width = \textwidth]{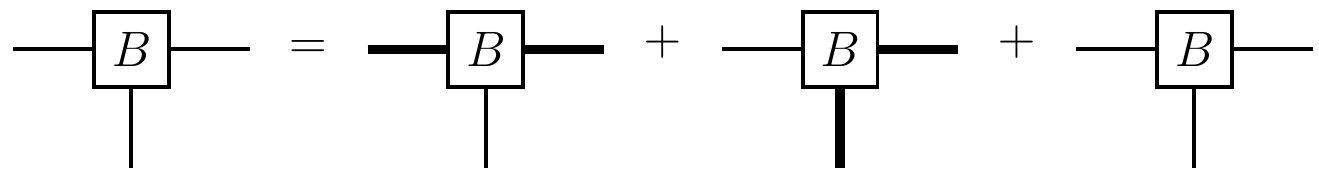}
\end{subfigure}
\\  \vspace{0.5cm}
\begin{subfigure}[h!]{0.6\textwidth}
\centering
\includegraphics[width = \textwidth]{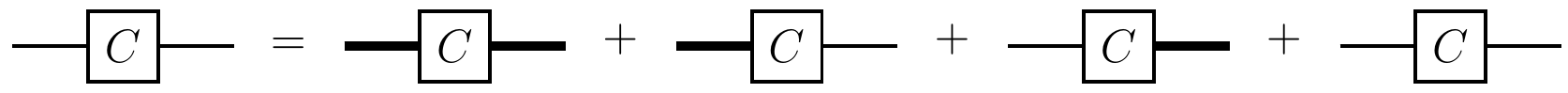}
\end{subfigure}
\\  \vspace{0.5cm}
\begin{subfigure}[h!]{0.4\textwidth}
\centering
\includegraphics[width = \textwidth]{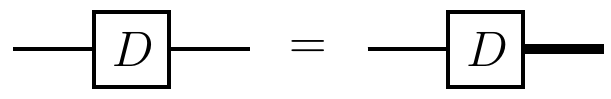}
\end{subfigure}
\\  \vspace{0.5cm}
\begin{subfigure}[h!]{0.4\textwidth}
\centering
\includegraphics[width = \textwidth]{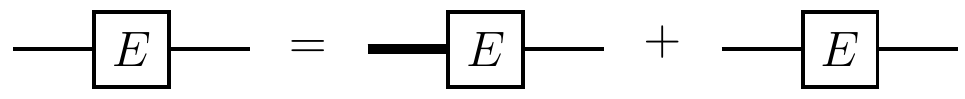}
\end{subfigure}
\\  \vspace{0.5cm}
\begin{subfigure}[h!]{0.4\textwidth}
\centering
\includegraphics[width = \textwidth]{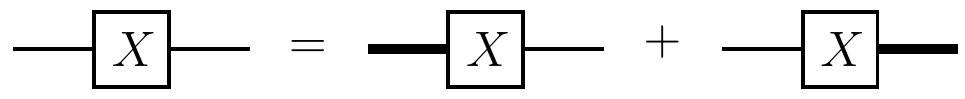}
\end{subfigure}
\caption{A pictorial proof of Theorem~\ref{theo:uncon}. To contract the network in Figure \ref{fig:T_perm}, first fix a labeling of the edges and then multiply together the corresponding tensor elements from each 
tensor.  Then, sum over all such labelings. Note that if any of these tensor elements is zero, then the corresponding term in the sum is also zero. We now illustrate that there is a one-to-one correspondence between 
the nonzero terms in the sum and the terms in the $n$ by $n$ permanent. Namely for every permutation 
$\pi \in S_n$, the product $x_{1,\pi(1)} \ldots x_{n,\pi(n)}$ appears as a nonzero term, and every other term is zero.
In this figure, the wires with thicker lines are labeled $1$, while the thinner lines are labeled $0$. One can check that no other labeling contributes to the sum. To visualize this, we have listed all the nonzero labeling combinations for the $A$, $B$, $C$, $D$, $E$, and $X$ tensors.}
\label{prooffig}
\end{figure}
\end{proof}


It is worth noting that probably \emph{most} skeletons have large monotone complexity using \emph{any} monotone method, and our reduction to a skeleton for the permanent is just a choice.  We choose the permanent as one of the few candidates for which an unconditional monotone lower bound is known.

Also worth noting is that many hard skeletons may be realized by easy circuits.  In the proof of Theorem~\ref{theo:uncon}, the quantum circuit we construct from Figure~$\ref{fig:C_perm}$ is Clifford, and so it can be simulated in \emph{polynomial} time classically.  Probably, there are quantum circuits which realize the same skeleton and are fundamentally hard to simulate. Monotone methods cannot distinguish between the two: they are oblivious to certain circuit structures altogether.

Interestingly, in the proof of Theorem~\ref{theo:uncon}, contracting the tensor $T_{perm}$ from left to right yields a monotone circuit computing the permanent in time $O(n^2 2^n)$.  To our knowledge, this is the fastest such monotone algorithm.  If restricted to monotone circuits, naive enumeration over all permutations takes time $\Omega(n!)$, whereas a Savitch-type trick (as in~\cite{aaronson2017complexity}) reduces the time complexity to $O(4^n)$.  The fastest general algorithms computing the permanent are non-monotone, and run in roughly $O(n2^n)$ time \cite{brualdi1991combinatorial, glynn2010permanent}. It would be interesting if our method could be modified to yield a monotone circuit matching the lower bound of $n(2^{n-1}-1)$.

Recently, Austrin et al.~\cite{austrin2017tensor} considered proving unconditional lower bounds in a similar context. Starting from a multilinear polynomial, they study the complexity of evaluating this polynomial using tensor network contraction. Roughly speaking, they define the associated time-complexity to be the size of the smallest arithmetic circuit computing the polynomial, subject to the restriction that the circuit may be realized as a tensor network contraction. In our model, we restrict to evaluating monotone arithmetic circuits but with arbitrary structure.  They allow evaluation of non-monotone arithmetic circuits, but restrict to those represented by a tensor network contraction. 

While their goal is to study a new computational model, our goal is to lower bound strong quantum simulators.  Consequently, our models are restrictive in different ways and so our lower bounds are incomparable.


\section{Conditional lower bounds for strong simulators}

Theorem \ref{theo:uncon} only holds for monotone simulation methods.
To prove a super-polynomial lower bound on general strong simulation,
we turn to a conditional hardness argument\footnote{Note that an unconditional proof would yield an advance in one of the hardest problems in theoretical computer science, showing $P \neq \#P$; no algorithm can count the number of solutions of Boolean expressions of size $n$ in time $n^{O(1)}$.}.
In this section we prove that $2^{n-o(n)}$ is a lower bound for {\em any} strong simulator with accuracy $2^{-n}$ under the Strong Exponential Time Hypothesis (SETH).

Recall that conditional hardness always takes the form: if problem A is hard then problem B is hard;
although a conditional hardness proof does not deliver absolute evidence for the hardness of B,
it is protected by the hardness of A.  The best choice for A is then a problem for which we have overwhelming evidence of hardness.

One of the most studied problems in computer science is the SAT problem together with its special cases, $k$-SAT for different $k$. 

\medskip

\noindent{\bf The SAT Problem.} The input for SAT with size parameters $n$ and $m$ is a Boolean formula
\[
\phi(x_{1},\ldots,x_{n}) = \bigwedge_{i=1}^{m}\left( \bigvee_{j=1}^{k_{i}} l_{i,j}\right) \;\;\;\;\;\;\;\; l_{i,j}\in \{x_{1},\ldots,x_{n}, \neg x_{1},\ldots,\neg x_{n}\}
\]
where 
$l_{i,j}$ are called \emph{literals} and the sub-expressions 
$C_i=\bigvee_{j=1}^{k_i}l_{ij}$ are called \emph{clauses}. Without loss of generality we require that every variable has at most one occurrence in every clause, implying
$k_{i}\le n$ for $1\le i\le m$. We are interested in formulas $\phi$ with polynomial length: 
$m=n^{O(1)}$. 

Given $\phi$, the task is to determine whether there exists an assignment $x_{i}\rightarrow\{0,1\}$ ($1\le i\le n$) such that $\phi(x_1,\ldots, x_n)=1$,
i.e. if $\phi$ is satisfiable.

\medskip

\noindent{\bf The $k$-SAT Problem.} The $k$-SAT problem is the SAT problem with the restriction that every instance $\phi$ has every $k_{i}$ at most $k$.

\medskip

\noindent The following two questions are unresolved.

\medskip

\noindent\begin{tabular}{lp{4.8in}}
Question 1. & Can the 3-SAT problem be solved in time $(1+\epsilon)^{n} {\rm poly}(m)$ for every $\epsilon > 0$? Algorithms have been repeatedly improved upon \cite{paturi1997satisfiability, paturi2005improved, schoning1999probabilistic, hertli20143} to reach the current state of the art at around $1.3^n$ time. \\[2ex]
Question 2. & Can the SAT problem be solved in time $2^{\alpha n}{\rm poly}(m)$ for some $\alpha<1$, when $m = poly(n)$?
The best current algorithm \cite{calabro2006duality} runs in time ${2^n \over 2^{n/O(\log m/n)}}$.
\end{tabular}

\medskip

An improved algorithm for SAT would make a tremendous impact on many areas of computer science.  This has lead to the somewhat widespread belief that reducing the time-complexity of the SAT problem will hit a hard limit.  This belief has been formalized in the following two commonly held hypotheses.

\medskip

\noindent{\bf Exponential Time Hypothesis (ETH).} The answer to Question 1 is no.  There is an $\epsilon>0$ such that the
time complexity of 3-SAT is at least $(1+\epsilon)^{n} {\rm poly}(m)$.

\medskip

\noindent{\bf Strong Exponential Time Hypothesis (SETH).} The answer to Question 2 is no. Any algorithm deciding SAT must take at least $2^{n-o(n)}{\rm poly}(m)$ time. 

\medskip

We show that if a strong quantum simulator can reach a certain efficiency, then the SETH would be refuted. This efficiency is quantified in terms of both the time-complexity and accuracy of the simulator.

\begin{Theorem}\label{mainconditional}
Assume the SETH holds. Then a universal quantum simulator which can 
approximate \emph{any} output amplitude to precision $2^{-n}/2$ on a quantum circuit with poly$(n)$ operations must take at least $2^{n-o(n)}$ time.
\end{Theorem}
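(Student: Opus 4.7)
The plan is to reduce CNF-SAT to strong simulation via a Hadamard-phase-oracle-Hadamard sandwich. Let $\phi$ be a CNF instance on $k$ variables with $m=\mathrm{poly}(k)$ clauses. I would build a quantum circuit $C_\phi$ on $n = k + O(\log m)$ qubits with $\mathrm{poly}(k)$ gates, consisting of $H^{\otimes k}$ on the first $k$ qubits, then a phase oracle $U_\phi : |x\rangle|\mathbf{0}\rangle \mapsto (-1)^{\phi(x)}|x\rangle|\mathbf{0}\rangle$, then $H^{\otimes k}$ again. A one-line Hadamard calculation gives
\[
\langle 0^n | C_\phi | 0^n \rangle \;=\; \frac{1}{2^k}\sum_{x\in\{0,1\}^k}(-1)^{\phi(x)} \;=\; 1 - \frac{2\cdot\#\mathrm{SAT}(\phi)}{2^k}.
\]
This amplitude is exactly $1$ when $\phi$ is unsatisfiable and is at most $1 - 2^{1-k}$ otherwise.

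\textbf{Realising the oracle cheaply.} The delicate step is implementing $U_\phi$ with only $o(k)$ ancillas; allocating one ancilla per clause would cost $m=\mathrm{poly}(k)$ extra qubits and destroy the intended bound. Instead I would maintain a $\lceil\log_2(m+1)\rceil$-bit counter plus a single scratch qubit: for each clause $C_i$, compute $C_i(x)$ into the scratch qubit via a small Toffoli cascade, increment the counter controlled on that scratch qubit, then uncompute the scratch qubit. After all $m$ clauses, apply a single $(-1)$ phase conditioned on the counter being equal to $m$, then reverse the whole sequence to reset counter and scratch to $|\mathbf{0}\rangle$. This realises $U_\phi$ using $O(\log m) = O(\log k)$ ancillas and $\mathrm{poly}(k)$ elementary gates, so the total width is $n = k + O(\log k)$.

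\textbf{From simulation to SAT and invoking SETH.} A strong simulator meeting the hypothesised accuracy returns an estimate $\tilde{a}$ with $|\tilde{a} - \langle 0^n|C_\phi|0^n\rangle| \le 2^{-n}/2$. Since $n = k + O(\log k)$, we have $2\cdot 2^{-n}/2 = 2^{-n} < 2^{1-k}$ for all sufficiently large $k$, so $\tilde{a}$ distinguishes the two cases above and thus decides satisfiability of $\phi$. If the simulator runs in time $T(n)$, we obtain a SAT algorithm in time $T(n)\cdot\mathrm{poly}(k)$. SETH then forces $T(n)\cdot\mathrm{poly}(k) \geq 2^{k - o(k)}$, and since $n - k = O(\log k)$, this rearranges to $T(n) \geq 2^{n - o(n)}$, as claimed.

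\textbf{Main obstacle.} The encoding and the SETH book-keeping are routine; the real constraint is holding the ancilla overhead to $o(k)$. If $n$ exceeded $k + o(k)$ then the exponential gap between $2^{n-o(n)}$ and $2^{k - o(k)}$ would open and the reduction would be vacuous. The counter-with-uncompute oracle is exactly what keeps the ancilla count logarithmic; everything else is bookkeeping around the precision $2^{-n}/2$ being comfortably smaller than the satisfiability gap $2^{1-k}$.
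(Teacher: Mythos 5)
Your proposal is correct and reaches the same conclusion, but via a genuinely different construction of the low-ancilla circuit. The paper's reduction first builds a reversible circuit $\mathcal{C}'_\phi$ that \emph{tidily computes the bit} $\phi(x)$ using Bennett's uncompute-then-recompute pebbling on the binary-tree structure of the CNF (Lemma~\ref{MAINlemma}), achieving width $n+2(\lceil\log n\rceil+\lceil\log m\rceil)$ and size $O\bigl((nm)^{\log_2 3}\bigr)$; it then sandwiches $\mathcal{C}'_\phi$ in Hadamards to read off the amplitude $\#\mathrm{SAT}(\phi)/2^{n}$, and thresholds at $2^{-n}/2$. You instead implement a \emph{phase oracle} $(-1)^{\phi(x)}$ by maintaining a $\lceil\log_2(m+1)\rceil$-bit counter of satisfied clauses (compute clause into scratch, controlled-increment, uncompute scratch, repeat; phase if counter $=m$; then unwind), and read off the amplitude $1-2\cdot\#\mathrm{SAT}(\phi)/2^{k}$. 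Both routes hinge on the same key point — keeping the ancilla overhead to $O(\log(nm))=o(k)$ so that the qubit count and the SAT-variable count differ only by a lower-order term, which you correctly flag as the load-bearing constraint. Your counter trick is arguably simpler and yields a smaller circuit ($O(mn\cdot\mathrm{polylog})$ versus the paper's $O((nm)^{\log_2 3})$), because it exploits the symmetry of AND over clause values rather than invoking the fully general Bennett argument; the paper's version is slightly more modular and gives the explicit constants it wants for the Theorem~\ref{sharpest} comparison against concrete SAT solvers. One minor exposition nit: when you say ``reverse the whole sequence,'' you mean reverse only the clause-processing and counter updates, not the phase gate — as written it reads as if the phase is also undone, but the intent is clear and the argument is sound.
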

The proof of Theorem \ref{mainconditional}
uses a reduction from SAT to argue that if the simulator can determine an amplitude up to a certain precision, then it could solve a corresponding SAT problem.
The reduction can be summarized as follows.
\begin{itemize}
\item[$(i)$]{For a SAT instance $\phi$ construct a reversible circuit ${\cal C}' = {\cal C}'_{\phi}$ with sub-linear space overhead and polynomial time overhead which can compute $\phi(x)$ for an assignment $x$.}
\item[$(ii)$]{Choose a basis state (e.g. $|0\ldots 0\rangle$) which counts the number of assignments satisfying $\phi$ in its amplitude when running $C_{\phi}$,
a quantum circuit constructed from ${\cal C}'$:
\[
\langle 0 \ldots 0 | {\cal C}_{\phi} |0\ldots 0\rangle = { |\,  \{ x\in \{0,1\}^{n}\;  |\;  \phi(x) = 1\}\, |  \over 2^{n}}
\]}
\end{itemize}
Step $(i)$ is purely classical while step $(ii)$ utilizes the quantum power of the simulator.
Any lower-bound on the time-complexity of SAT
then implies a lower bound on the time complexity of the simulator.  
Theorem \ref{mainconditional} utilizes the strongest conjectured lower bounds available for SAT to imply the sharpest conjectured lower bounds for the parameters of a strong simulator.
To push our bounds even further, we determine the threshold parameters beyond which a simulator would improve upon the best known algorithms solving SAT.

\begin{Theorem}\label{sharpest}
Assume there is a strong simulator that runs in time ${2^{n} \over 2^{n/o(\log m/n)}} $. Then this would improve on the running time of the best SAT solver.
\end{Theorem}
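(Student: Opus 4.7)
The plan is to reuse the SAT-to-simulation reduction behind Theorem~\ref{mainconditional} and track how the parameters transform under it. Given a SAT instance $\phi$ with $n$ variables and $m$ clauses, I would first build the reversible circuit $\mathcal{C}'_\phi$ computing $\phi$ with sublinear-space and polynomial-time overhead. Embedding $\mathcal{C}'_\phi$ between Hadamard layers on the input wires and clearing garbage via uncomputation produces a quantum circuit $\mathcal{C}_\phi$ on $N = n + o(n)$ qubits with $M = \mathrm{poly}(n,m)$ gates such that
$$\langle 0^N|\mathcal{C}_\phi|0^N\rangle = \frac{|\{x\in\{0,1\}^n : \phi(x)=1\}|}{2^n}.$$
Thus this amplitude is either zero or at least $2^{-n}$, so any strong simulator of precision $2^{-n}/2$ decides satisfiability of $\phi$.

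Next, I would check that the reduction preserves the scale of the logarithm appearing in the $o(\cdot)$. Since $m = \mathrm{poly}(n)$ and $N = n + o(n)$, we have $M/N = \mathrm{poly}(n,m)/(n + o(n))$, so $\log(M/N) = \Theta(\log(m/n))$ in the regime $\log(m/n) \to \infty$ (the only regime in which the Calabro--Impagliazzo--Paturi bound is nontrivial). Consequently, any function in $o(\log(M/N))$ on the simulator side can be reidentified with a function in $o(\log(m/n))$ on the SAT side.

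Combining the two observations, the assumed simulator yields a SAT algorithm running in time
$$\frac{2^N}{2^{N/o(\log(M/N))}} = 2^{\,n + o(n)\, -\, n/o(\log(m/n))}.$$
Because $o(\log(m/n))$ is strictly smaller than the $O(\log(m/n))$ in the Calabro et al.\ bound, the subtractive $n/o(\log(m/n))$ strictly dominates $n/O(\log(m/n))$ and easily absorbs the $o(n)$ additive overhead. Hence the resulting running time is asymptotically strictly better than $2^{n - n/O(\log(m/n))}$, beating the best known SAT solver.

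The main obstacle, and essentially the only subtle step, is the bookkeeping in the second paragraph: one must verify that the sublinear-space and polynomial-time overheads of the reversible construction do not secretly inflate $\log(M/N)$ beyond $\Theta(\log(m/n))$, since such an inflation would shift the $o(\cdot)$ class and break the comparison with Calabro et al. One must also confirm that the $o(n)$ ancilla overhead is dominated by the $n/o(\log(m/n))$ savings in the exponent, which is precisely what holds once $\log(m/n) \to \infty$.
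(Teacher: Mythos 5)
Your proposal is correct and follows the same route as the paper: run the assumed simulator on the circuit $\mathcal{C}_\phi$ constructed via Lemma~\ref{MAINlemma}, substitute the width and size bounds, and compare against the Calabro et al.\ running time. If anything you are more careful than the paper's one-line calculation, which silently conflates the SAT parameters $(n,m)$ with the simulated circuit's parameters in the denominator $2^{n/o(\log m/n)}$; your verification that $\log(M/N)=\Theta(\log(m/n))$ in the relevant regime is exactly the justification the paper leaves implicit.
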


In section~\ref{rereversible} we address point $(i)$.  Next, in section~\ref{reduction}, we address point $(ii)$ and prove Theorems~\ref{mainconditional} and~\ref{sharpest}.

\subsection{Reversible evaluation of a SAT formula} \label{rereversible}

In this section we construct a reversible circuit ${\cal C}'$ evaluating a given SAT formula using sub-linear space overhead and polynomial time overhead.
This problem was first efficiently solved by Charles H. Bennett in 1989 \cite{bennett1989time}. For the sake of completeness and explicit constants, we reproduce the argument here, but emphasize that our 
construction follows \cite{bennett1989time}. Bennett expresses the problem in the 
language of Turing Machines rather than circuits,\footnote{A uniform rather than non-uniform computational device.} but the proof ideas
are otherwise identical.


\medskip

\noindent{\bf Reversible Circuits.} A reversible classical circuit consists of reversible gates. A reversible classical gate $F$ is simply an invertible function 
$F:\{0,1\}^d\rightarrow \{0,1\}^d$, for some $d$. Typically $d$ is one, two, or three. An important example of a reversible classical gate is the 
Toffoli gate $\toffoli(x,y,z)=(x,y,z\oplus (x\land y))$ acting on 3 bits. 

\medskip

\noindent{\bf Universal Gate Set.} Throughout, our choice of universal gate set for reversible computation is
\[
{\cal G} = \{\toffoli,\cnot,\nnot\}.
\] 

The above choice influences the circuits for which our lower bound argument holds up to constants.  
The particular gate set we have chosen has the nice property that all its elements are self-inverse.

\medskip

\noindent{\bf Tidy Computation.} We say a reversible circuit $C:\{0,1\}^{n+a(n)+1}\rightarrow \{0,1\}^{n+a(n)+1}$ \emph{tidily computes} a function $f:\{0,1\}^n\rightarrow \{0,1\}$ if 
\begin{equation}\label{reversible}
\forall \; x\in\{0,1\}^n, y\in\{0,1\}, \;\;\; C(x,0^{a(n)},y)=(x,0^{a(n)},y\oplus f(x)).
\end{equation}

\medskip

We call the $a(n)$ extra bits the \emph{ancilla} bits. The tidiness comes from the fact that the input and the ancilla bits are restored by the end of the computation.
From the perspective of quantum circuits, every reversible gate can be seen as a unitary transformation. 
By linearity, the action of a reversible circuit $C$ computing $f$ in Equation (\ref{reversible})
can be extended to an action on the Hilbert space $C^{\{0,1\}^{n}}$:
\medskip
\begin{center}
$C\;\;\;\;$ applied to $\;\;\;\;\;\;\sum_{x}\alpha_x |x,0^{a(n)},0\rangle\;\;\;\;\;\;$ is $\;\;\;\;\;\;\sum_x \alpha_x|x,0^{a(n)},f(x)\rangle$.
\end{center}
\medskip

\noindent Note that the ancilla bits are decoupled from the input and output registers.

\begin{Lemma}[{\bf MAIN}, \cite{bennett1989time}]\label{MAINlemma}
Suppose $\phi$ is a SAT formula with $n$ variables and $m$ clauses.  
Then there is a circuit ${\cal C}'$ that tidily computes $\phi$ with
\[
s({\cal C}') \le   8 \times 3^{\lceil \log n \rceil +  \lceil \log m \rceil} -1, \;\;\;\;\;\;\; w({\cal C}') \le n + 2(\lceil \log n \rceil +  \lceil \log m \rceil ).
\]
\end{Lemma}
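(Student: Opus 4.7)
My plan is to adapt Bennett's 1989 reversible-simulation technique to the task of evaluating $\phi$. The construction has three pieces: first, exhibit a small-state irreversible evaluator of $\phi$; second, apply Bennett's recursive checkpointing (the reversible ``pebble game''); and third, use a single $\cnot$ followed by a reverse pass to obtain a \emph{tidy} reversible circuit over $\{\toffoli,\cnot,\nnot\}$.

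For the irreversible evaluator, I would build a fixed sequence of constant-size gadgets that maintains just two one-bit workspace registers: $A$, the running AND of the clauses evaluated so far, and $R$, the running OR of the literals within the current clause. For each literal $\ell_{i,j}$ in clause-order, a gadget reads the appropriate bit of $x$ (with a $\nnot$ for negated literals) and performs $R\leftarrow R\vee\ell_{i,j}$; at the end of each clause, a gadget performs $A\leftarrow A\wedge R$ and resets $R$. Since $x$ is read but never modified, the only time-varying state is the two-bit pair $(A,R)$. After padding with no-op gadgets, this gives a schedule of exactly $2^{d}$ atomic transitions $s_i\mapsto s_{i+1}$ on this two-bit state, where $d=\lceil\log n\rceil+\lceil\log m\rceil$.

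Now apply Bennett's recursion with branching factor $2$ and recursion depth $d$ to this schedule. The recursive step ``reversibly reach halfway, reversibly reach the end from there, then reverse the first half to reclaim the intermediate checkpoint'' yields the familiar recurrence $N(k)=3N(k-1)+c_1$, with a constant-size base case implementing $(s_i,0)\mapsto(s_i,s_{i+1})$. Each of the $d+O(1)$ checkpoints along the recursion stores only $(A,R)$, i.e.\ two bits, because $x$ is shared globally; and the self-inverse property of $\{\toffoli,\cnot,\nnot\}$ makes ``uncompute'' literally ``run the forward gadget in reverse,'' so forward and backward passes have identical gate count. Once the Bennett simulation has produced $s_{2^{d}}=(\phi(x),0)$ in two ancilla bits, I would $\cnot$ $\phi(x)$ into the designated output bit $y$ and then run the simulation in reverse to tidy the ancillas. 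The total gate count is $2N(d)+1$; solving the recurrence with appropriate base and per-level constants yields $s(\mathcal{C}')\le 8\cdot 3^{d}-1$, and the ancilla-width bookkeeping yields $w(\mathcal{C}')\le n+2(\lceil\log n\rceil+\lceil\log m\rceil)$.

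The obstacle is purely constant-factor bookkeeping rather than a conceptual one: the qualitative $3^{d}$-time and $O(d)$-space bounds are immediate from Bennett. The real work lies in (i) designing the irreversible evaluator so that its non-input state is exactly the two bits $(A,R)$, ensuring that each Bennett checkpoint costs only two ancilla bits rather than more, and (ii) pinning down the constants $c_0,c_1$ in the recurrence so that the leading coefficient in the final size bound is $8$ and the additive width slack is exactly $2d$, rather than some larger multiple that a careless implementation would incur.
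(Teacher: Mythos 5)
Your route is genuinely different from the paper's, and is closer to the letter of Bennett's original pebble game. You linearize the evaluation of $\phi$ into a time-ordered sequence of transitions on a two-bit state $(A,R)$ and then run Bennett's checkpoint recursion over \emph{time steps}. The paper never linearizes: it introduces the notion of an \emph{untidy} computation, proves a small composition lemma (Lemma~\ref{lem:block}) that turns untidy circuits for $f_1,f_2$ into an untidy circuit for $f_1\wedge f_2$ or $f_1\vee f_2$ of size $2s_1+s_2+O(1)$ and width $+2$, iterates this up the balanced AND/OR tree of $\phi$ (Corollaries~\ref{largenum},~\ref{largenum2},~\ref{smallcor}), and applies the $U\,G\,U^{-1}$ tidying construction (Lemma~\ref{weaktostrong}) once at the very top. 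The two routes share the compute--copy--uncompute idea and the final tidy-up pass; what differs is the recursion variable (time steps for you, formula subtrees for the paper), which in turn changes what the base case is.

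Both routes clearly give the qualitative $3^{d}$ size and $O(d)$ ancilla width with $d=\lceil\log n\rceil+\lceil\log m\rceil$, but the paper's tree recursion is the better fit for the specific constants, and you have flagged precisely where your bookkeeping is open. Three places where the generic pebble game is at risk of overshooting. (i)~Schedule length: your transitions comprise $\sum_i k_i\le nm$ literal updates \emph{plus} $m$ end-of-clause updates $A\leftarrow A\wedge R;\,R\leftarrow 0$, so you must fold each clause-close into the last literal gadget of that clause to stay at $\le nm\le 2^d$ transitions; otherwise $d$ slips by one and the leading constant triples. (ii)~Base-case cost $c_0$: a single reversible transition gadget $(A,R,0,0)\mapsto(A,R,A',R')$ already needs a $\cnot$ to carry $A$ forward plus two $\cnot$s and a $\toffoli$ to form $R\vee l_{ij}$ (and more for the bundled clause-close), whereas the paper's leaves are single literals costing one or two gates, because the clause-level OR and formula-level AND are absorbed by the recursion rather than the leaf gadget. (iii)~Width: the pebble game allocates two fresh wires per recursion level for $d$ levels, plus two wires for the running two-bit state, plus the tidying output wire, landing around $n+2d+3$ rather than $n+2d$, and you still need a $\nnot$ to reset the initial $A_0=1$. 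None of these are conceptual failures, but they mean a direct plug-in of Bennett's pebble game produces a larger leading coefficient and additive width slack than the $8$ and $2d$ in the statement; the paper's gadget-level design (untidy AND/OR composition on the formula tree) is tailored to land exactly on those constants.
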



\begin{wrapfigure}{r}{4.5cm}
\begin{tabular}{lp{3.3cm}}
$\;$ & 
\includegraphics[width=4.5cm]{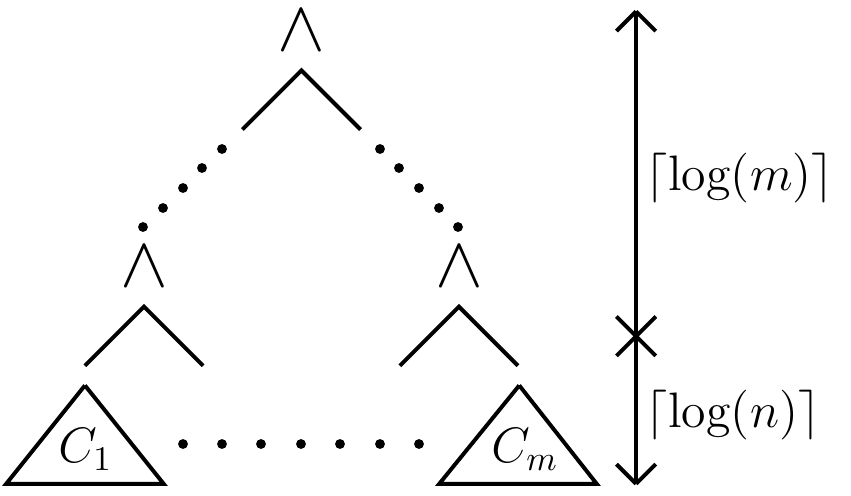}  
\caption{\label{ANDOR} SAT instance $\phi$ expressed as a binary tree.} 
\vspace{-0.1in}
\end{tabular}
\end{wrapfigure}
\noindent Our proof scheme follows that in \cite{bennett1989time}. Notice that $\phi$ can be written as a binary tree with ANDs and ORs as in Figure \ref{ANDOR}.
Thus, we will first implement ANDs and ORs of circuits in a reversible way.
We will then recursively compose ${\cal C}'= {\cal C}'_{\phi}$ by exploiting the binary tree structure. In the rest of the section, we will prove Lemma~\ref{MAINlemma} using these steps. To get the best constants we first define the following.

\medskip

\noindent{\bf Untidy Computation.} We say that a reversible circuit $C:\{0,1\}^{n+a(n)+1}\rightarrow \{0,1\}^{n+a(n)+1}$ \emph{untidily computes} a 
function $f:\{0,1\}^n\rightarrow \{0,1\}$ if $C(x,0,0) = (\ast,\ast, f(x))$ for all $x\in\{0,1\}^n$.

\begin{Lemma}\label{weaktostrong}
We can convert a circuit $U$ that untidily computes $f$ into a circuit that tidily computes $f$. The conversion
doubles the size of the circuit along with adding one extra CNOT gate and one extra ancilla wire, as shown in Figure \ref{weak}.
\end{Lemma}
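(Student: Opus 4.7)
The plan is to carry out the standard Bennett-style \emph{uncomputation} construction. Let $U$ be the given untidy circuit on $n + a(n) + 1$ wires, so that $U(x,0^{a(n)},0) = (\ast,\ast,f(x))$. I will build a tidy circuit on $n + a(n) + 2$ wires, where the extra wire is the fresh output register carrying the input bit $y$, and the original last wire of $U$ is now regarded as part of the ancilla.

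The construction proceeds in three stages. First, apply $U$ to the original $n + a(n) + 1$ wires, leaving the new wire untouched; starting from $(x,0^{a(n)},0,y)$ this produces $(\ast,\ast,f(x),y)$. Second, apply a single $\cnot$ controlled by the last wire of the $U$-register and targeting the new wire, yielding $(\ast,\ast,f(x),y\oplus f(x))$. Third, apply $U^{-1}$ to the original $n+a(n)+1$ wires. Since $U$ is a bijection and $U(x,0^{a(n)},0)=(\ast,\ast,f(x))$, we have $U^{-1}(\ast,\ast,f(x))=(x,0^{a(n)},0)$, so the final state is exactly $(x,0^{a(n)},0,y\oplus f(x))$, matching the definition of tidy computation (with an ancilla of width $a(n)+1$).

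For the size accounting, the key observation is that every gate in the chosen universal set $\mathcal{G}=\{\toffoli,\cnot,\nnot\}$ is its own inverse. Consequently, $U^{-1}$ can be implemented by listing the gates of $U$ in reverse order with no change of gate type, giving $|U^{-1}|=|U|$. The total gate count is therefore $|U|+1+|U|=2|U|+1$, i.e.\ we have doubled the circuit and added one $\cnot$, as claimed; the wire count has grown by exactly one.

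The construction is essentially forced once one decides to uncompute the garbage, so there is no serious obstacle. The only subtle point worth being careful about is the bookkeeping of which wire plays which role in the new tidy circuit: the last wire of $U$, which used to carry $f(x)$, now ends the computation holding $0$ and is absorbed into the ancilla, while the brand-new wire is what actually carries $y \oplus f(x)$ at the end. Matching this description against Figure~\ref{weak} completes the proof.
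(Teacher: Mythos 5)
Your proof is correct and follows exactly the same $U\,G\,U^{-1}$ uncompute construction used in the paper, tracing the same three stages (compute, CNOT-copy, uncompute) and arriving at the same bookkeeping of which wire becomes the output and which is absorbed into the ancilla. You also explicitly invoke the self-inverse property of $\mathcal{G}$ to justify $|U^{-1}| = |U|$, which the paper notes a few lines earlier but leaves implicit in the lemma's proof; that is a small but welcome piece of added rigor.
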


\begin{proof}
The circuit in Figure \ref{weak} executes $UGU^{-1}$, where $G$ is a CNOT gate acting on the 
new output wire ($b$ in the figure), and controlled by the old output ($a$ in the figure). More precisely,
\begin{eqnarray*}
|x,0,\underbrace{0}_{a},\underbrace{0}_{b}\rangle  & \stackrel{U}{\longrightarrow} & |\alpha,\beta,f(x),0\rangle \; \stackrel{G}{\longrightarrow} \;
|\alpha,\beta,f(x),f(x)\rangle 
\; \stackrel{U^{-1}}{\longrightarrow} \;
|x,0,0,f(x)\rangle \\
|x,0,\underbrace{0}_{a},\underbrace{1}_{b}\rangle & \stackrel{U}{\longrightarrow} & |\alpha,\beta,f(x),1\rangle \; \stackrel{G}{\longrightarrow} \;
|\alpha,\beta,f(x),\neg f(x)\rangle 
\; \stackrel{U^{-1}}{\longrightarrow} \;
|x,0,0,\neg f(x)\rangle
\end{eqnarray*}
Observe that the added wire carries the output, while the original output wire serves as an ancilla.
\end{proof}

\begin{shaded}
{\small
\noindent{\bf History.} The following is how Bennett describes the principle of cleaning up unwanted data. 

``{\em The simulation {\rm [of a non-reversible Turing Machine]} proceeds 
in three stages. The first stage uses an extra tape, initially blank, to record all the information that would have been thrown away by the irreversible computation being simulated.
This history-taking renders the computation reversible, but also leaves a large amount of unwanted data on the history tape (hence the epithet ``untidy''). The second stage copies the 
output produced by the first stage onto a separate output tape, an action that is intrinsically reversible (without history taking) if the output tape is initially blank. 
The third stage exactly reverses the work of the first stage, thereby restoring the whole machine, except for the now-written output tape, to its original condition.
This cleanup is possible exactly because the first stage was reversible and deterministic.}''}
\end{shaded}

\medskip

\begin{figure}
\begin{center}
\includegraphics[width=4.5cm]{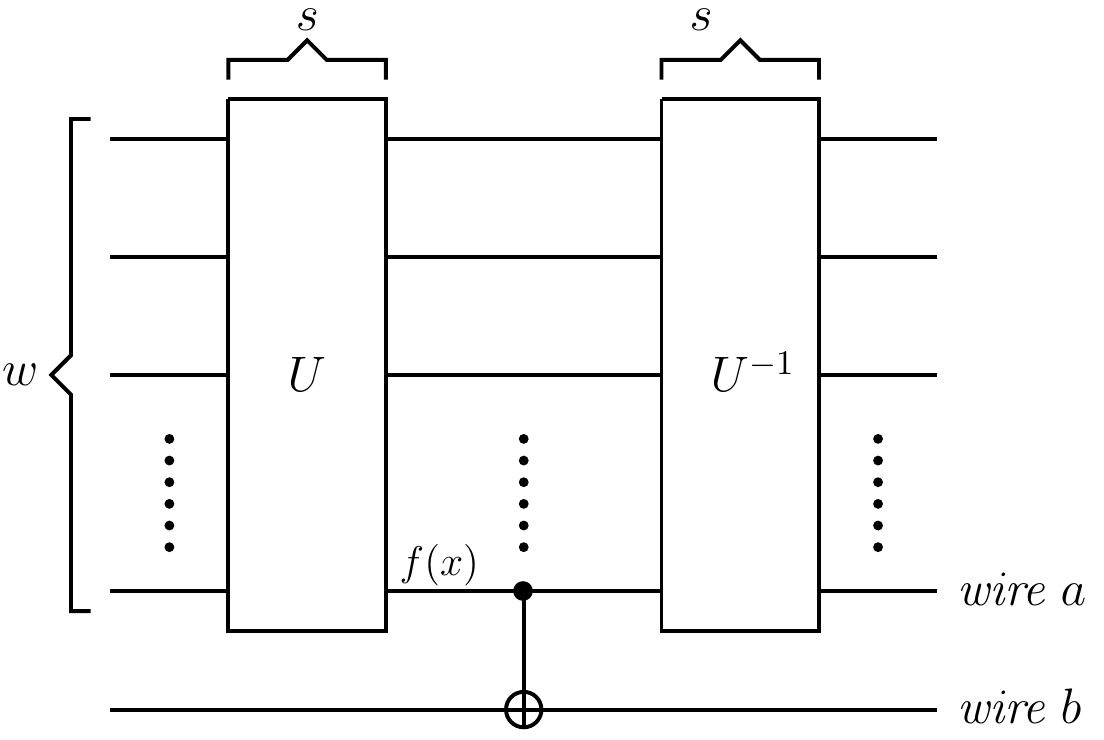}  
\end{center}
\caption{\label{weak} The construction that restores the input and ancilla wires after an untidy computation of $f(x)$ and produces the output on the added wire $b$. 
Initially, the output of $U$ was sent to wire $a$.} 
\end{figure}

The main building blocks of ${\cal C}'$ that implement binary AND and OR as reversible circuits come from the following lemma.

\begin{Lemma}
\label{lem:block}
Suppose $U_1$ and $U_2$ untidily compute $f_1$ and $f_2$ with width at most $w$ and sizes $s_1$ and $s_{2}$, respectively.
Then there is a circuit
of width at most $w+2$ and size $2s_{1}+ s_{2}+ 2$ that untidily computes $f_{1} \wedge f_2$, and a circuit
of width at most $w+2$ and size $2s_{1}+ s_{2}+ 5$ that untidily computes
$f_{1} \vee f_{2}$.
\end{Lemma}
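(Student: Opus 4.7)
The plan is to compose $U_{1}$ and $U_{2}$ sequentially and then combine their outputs with a small reversible gadget. The only obstruction is that $U_{1}$ is untidy: after it runs, the wires originally carrying $x$ are corrupted, so $U_{2}$ cannot be invoked on $x$ afterwards. To get around this, I would first apply Lemma~\ref{weaktostrong} to produce a tidy version $\tilde U_{1}$ of $U_{1}$ of size $2s_{1}+1$ and width $w+1$. This restores $x$ and the original $U_{1}$ ancillas (including the old $U_{1}$ output wire) to $0$ while placing $f_{1}(x)$ on a fresh wire.

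Next I would apply $U_{2}$ using the same input wires, routing $U_{2}$'s ancillas onto $U_{1}$'s (now zeroed) ancilla wires and $U_{2}$'s output wire onto the old $U_{1}$ output wire. This adds $s_{2}$ gates and no new wires, leaving $f_{1}(x)$ and $f_{2}(x)$ on two distinct wires. For AND, I would then allocate one fresh wire and apply a single $\toffoli$ controlled on the two output wires and targeted at the new wire, producing $f_{1}(x)\wedge f_{2}(x)$. The total size is $(2s_{1}+1)+s_{2}+1 = 2s_{1}+s_{2}+2$ and the width is $w+2$, matching the claim.

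For OR, I would use the De Morgan identity $f_{1}\vee f_{2} = \neg(\neg f_{1}\wedge \neg f_{2})$. Concretely, apply $\nnot$ to each of the wires carrying $f_{1}(x)$ and $f_{2}(x)$, then a $\toffoli$ into a fresh ancilla, and finally a $\nnot$ on that ancilla. These $4$ extra gates bring the total to $2s_{1}+s_{2}+5$ within width $w+2$; the resulting circuit is untidy because the two $\nnot$ gates on the $f_{i}$ wires are never undone, which is permitted.

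The main obstacle is the width accounting rather than any conceptual difficulty. A careless allocation of fresh ancillas for $U_{2}$ inflates the total width to roughly $2w$ rather than the claimed $w+2$. The key structural observation is that after tidying $U_{1}$ there are exactly $w-n$ non-input wires sitting at $0$, which is precisely enough room to host $U_{2}$'s $w-n-1$ ancillas together with its output wire. The asymmetry between the coefficients of $s_{1}$ and $s_{2}$ — $2s_{1}+s_{2}$ rather than $2s_{1}+2s_{2}$ — reflects exactly the fact that the overall circuit is allowed to be untidy, so only the first subcircuit must be upgraded via Lemma~\ref{weaktostrong}.
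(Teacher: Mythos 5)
Your proposal is correct and matches the paper's construction. The paper's proof is purely pictorial, but the size bounds $2s_1 + s_2 + 2$ and $2s_1 + s_2 + 5$ force exactly the circuit you describe: apply Lemma~\ref{weaktostrong} to $U_1$ (accounting for the $2s_1 + 1$ and the extra wire), reuse the zeroed ancillas and old output wire of $U_1$ to host $U_2$ (accounting for $+s_2$ and no new width), then a single $\toffoli$ into a fresh wire for AND, with three $\nnot$ gates added for OR via De~Morgan (accounting for the $+2$ versus $+5$ and the second extra wire). Your observations about the width budget and about the $2s_1 + s_2$ asymmetry reflecting the untidiness of the output are exactly right and are the non-obvious points of the construction.
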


Note that in Lemma \ref{lem:block}, we could have performed tidy computations rather than untidy computations with meager overhead. We avoid this for two reasons.  First, the untidy computation has a
simpler circuit. Second, the constant factor loss will
culminate in a polynomial loss due to the iterative construction in the proof of Theorem \ref{mainconditional}. Thus, it is more economical to untidily compute everything until the end,
and then make the circuit tidy by employing Lemma \ref{weaktostrong}. The bounds are specific to the gate set ${\cal G}$, 
which results in the slight asymmetry between the AND and OR circuits.

\begin{proof}[Proof of Lemma~\ref{lem:block}]
The reversible AND and OR of the circuits, with the appropriate sizes and widths, are constructed explicitly below.

\medskip

\begin{center}
\begin{tabular}{ccc}
\includegraphics[width=5cm]{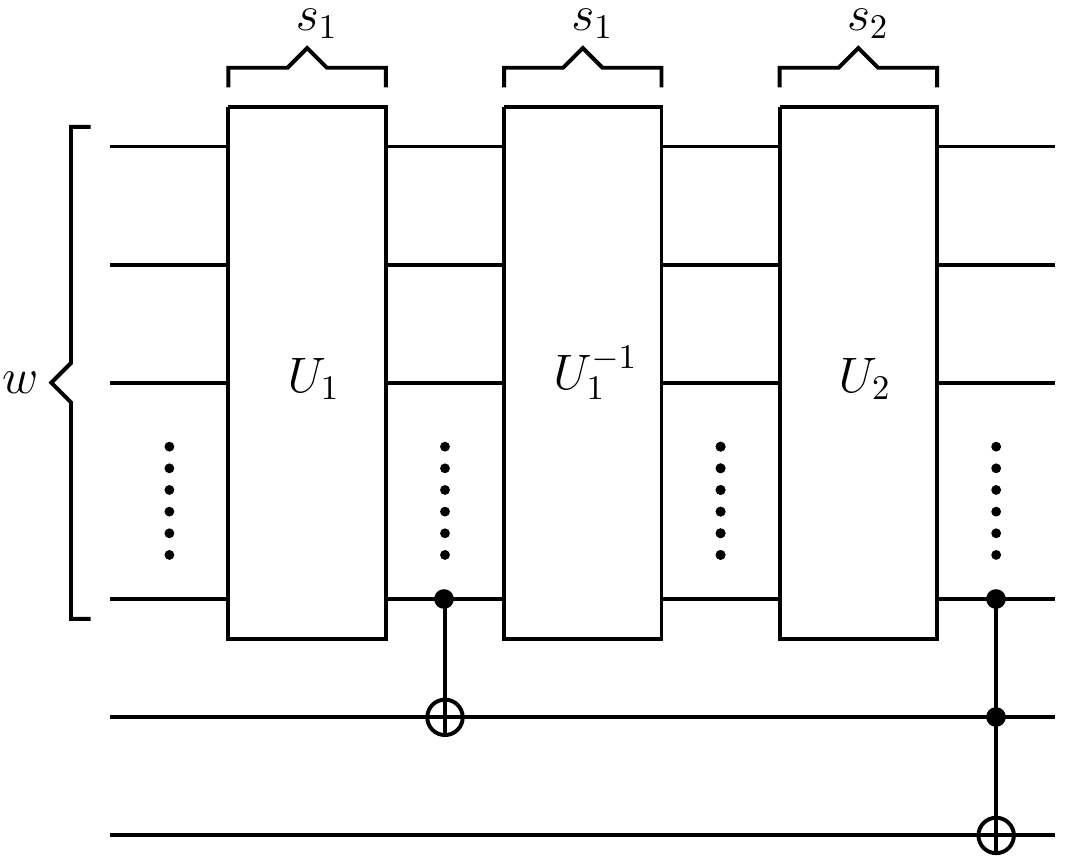}   & $\hspace{0.1in}$ & \includegraphics[width=5.7cm]{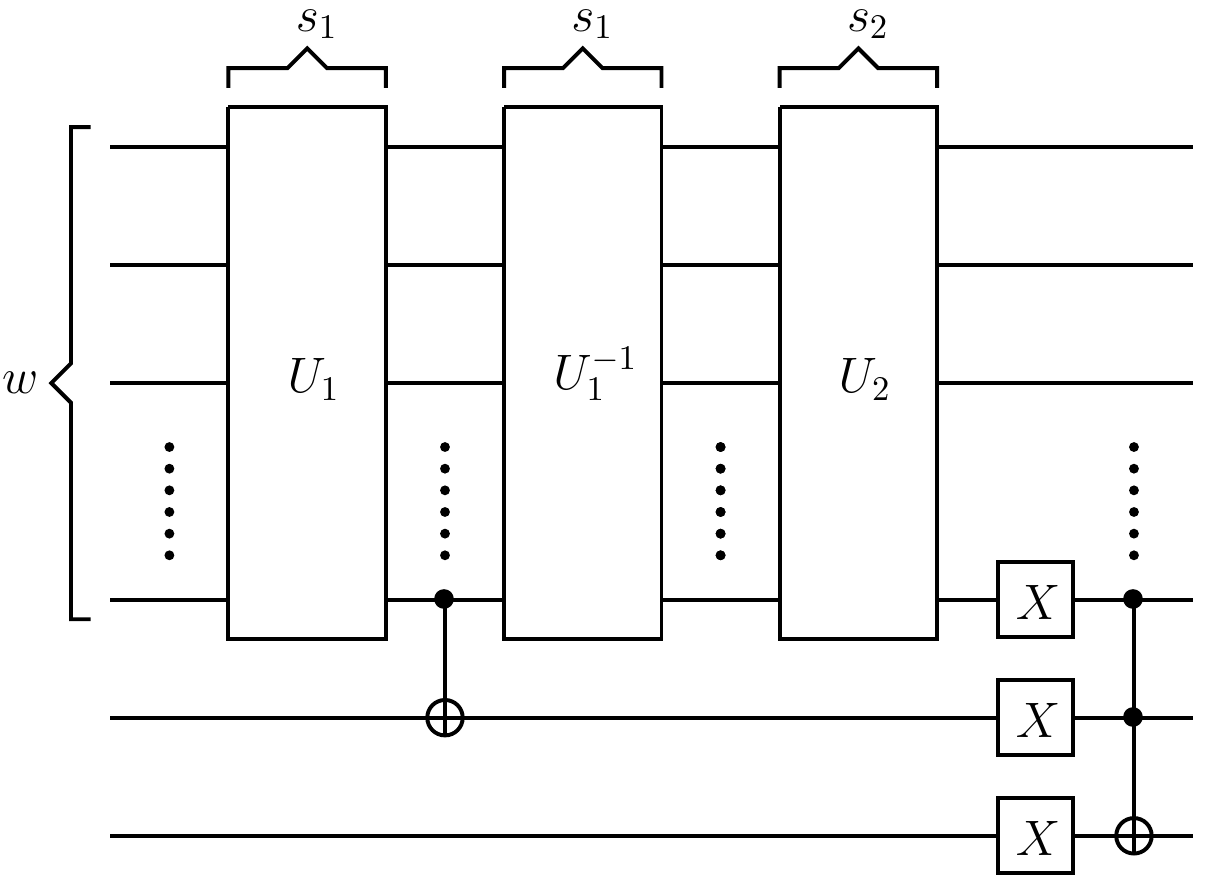}   \\[5pt]
Circuit for $f_{1}\wedge f_{2}$    &     &      Circuit for $f_{1}\vee f_{2}$ 
\end{tabular}
\end{center}
\end{proof}

\begin{Corollary} \label{largenum}
Untidily computing $f_{1}\wedge \ldots \wedge f_{n}$ takes width at most $w+ 2\lceil \log n \rceil$ and size at most $ 3^{\lceil \log_{2} n \rceil} (s+1)-1$,
where $s$ and $w$ are the simultaneous size and width upper bounds for circuits that untidily-compute the set of $f_{i}$. 
\end{Corollary}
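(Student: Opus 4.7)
\medskip
\noindent\textbf{Proof plan for Corollary \ref{largenum}.} The natural approach is a balanced binary-tree recursion that repeatedly applies Lemma~\ref{lem:block}. Assume for convenience that $n = 2^k$ with $k = \lceil \log_2 n \rceil$; if not, pad by appending constant-$1$ functions with trivial circuits, which only makes the task easier. Split the conjuncts into the left half $g_L = f_1 \wedge \cdots \wedge f_{n/2}$ and the right half $g_R = f_{n/2+1} \wedge \cdots \wedge f_n$, recursively build untidy circuits $U_L, U_R$ of sizes $T(n/2)$ and widths $W(n/2)$, and then combine them using the AND construction from Lemma~\ref{lem:block}.

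Let $T(n)$ and $W(n)$ denote the size and width of the resulting untidy circuit. Plugging $s_1 = s_2 = T(n/2)$ into the size bound of Lemma~\ref{lem:block} gives the recurrence
\[
T(n) \;\le\; 2T(n/2) + T(n/2) + 2 \;=\; 3\,T(n/2) + 2, \qquad T(1) = s,
\]
and the width bound gives $W(n) \le W(n/2) + 2$ with $W(1) = w$. An easy induction on $k$ then yields
\[
T(2^k) \;\le\; 3^{k}(s+1) - 1, \qquad W(2^k) \;\le\; w + 2k,
\]
since $3\bigl(3^{k-1}(s+1) - 1\bigr) + 2 = 3^{k}(s+1) - 1$. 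Substituting $k = \lceil \log_2 n \rceil$ produces exactly the bounds claimed in the corollary.

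The main thing to be careful about is the asymmetry in Lemma~\ref{lem:block}: the size bound $2s_1 + s_2 + 2$ treats the two subcircuits differently, because the first one must be run, then uncomputed. Fortunately, using the same recursive bound for both halves makes this asymmetry harmless at the level of the recurrence, and the base case $T(1) = s$ together with the additive constant $2$ slot cleanly into the $+1$ shift in $s+1$. The only real subtlety is verifying that widths compose as claimed, i.e.\ that the two subtrees can be implemented on the same $W(n/2)$ wires with only two new wires added per combining step; this follows directly from the width part of Lemma~\ref{lem:block}, applied inductively. The padding issue for non--power-of-two $n$ is handled by the ceiling in the exponent.
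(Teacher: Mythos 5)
Your proof is correct and matches the approach the paper clearly intends (the paper does not spell out a proof for Corollary~\ref{largenum}, treating it as a routine consequence of Lemma~\ref{lem:block} via the binary-tree structure already shown in Figure~\ref{ANDOR}). The recurrences $T(n)\le 3T(n/2)+2$ and $W(n)\le W(n/2)+2$ are exactly what the AND construction of Lemma~\ref{lem:block} gives, the substitution $T(n)+1 = 3(T(n/2)+1)$ cleanly solves the size recurrence, and your observation that the asymmetry $2s_1+s_2+2$ is harmless under a balanced split is right. The padding to a power of two with constant-$1$ circuits is a standard and valid device (and if $s=0$ the conjunction is identically $0$, so nothing needs to be built); alternatively one could use an unbalanced tree of depth $\lceil\log_2 n\rceil$ and place the shallower subtree in the $s_1$ slot, but your version is cleaner.
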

\begin{Corollary} \label{largenum2}
untidily computing $f_{1}\vee \ldots \vee f_{n}$ takes width at most $w+ 2\lceil \log n \rceil$ and size at most $3^{\lceil \log_{2} n \rceil}(s+\frac52)-\frac52$.
\end{Corollary}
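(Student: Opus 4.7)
The plan is to imitate the proof of Corollary \ref{largenum} verbatim, swapping in the OR-circuit of Lemma \ref{lem:block} in place of the AND-circuit. Concretely, I would build $f_1 \vee \cdots \vee f_n$ via a balanced binary tree of depth $\lceil \log_2 n \rceil$: at each internal node, given untidy circuits $V_L$ and $V_R$ for the OR of the left and right halves of size at most $\lceil n/2 \rceil$, combine them into an untidy circuit for the OR of all their inputs using the OR-gadget of Lemma \ref{lem:block}. At the leaves, use the given circuits $U_i$ (of width at most $w$ and size at most $s$).

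For the width bound, Lemma \ref{lem:block} introduces two fresh ancilla wires each time it combines two subcircuits, and these wires can be reused between disjoint subtrees at the same depth, so the total width grows by exactly $2$ per level of the tree. Since the tree has depth $\lceil \log_2 n \rceil$, the final width is at most $w + 2\lceil \log_2 n \rceil$.

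For the size bound, Lemma \ref{lem:block} tells us that the OR-gadget applied to subcircuits of sizes $s_1,s_2$ produces a circuit of size $2s_1 + s_2 + 5$. Applying this symmetrically along a balanced tree of height $h = \lceil \log_2 n \rceil$ gives the recurrence
\[
T(h) \;=\; 3\,T(h-1) + 5, \qquad T(0) = s,
\]
whose solution is
\[
T(h) \;=\; 3^{h} s + 5\bigl(3^{h-1} + 3^{h-2} + \cdots + 1\bigr) \;=\; 3^{h} s + 5 \cdot \frac{3^{h}-1}{2} \;=\; 3^{h}\!\left(s + \tfrac{5}{2}\right) - \tfrac{5}{2}.
\]
Substituting $h = \lceil \log_2 n \rceil$ gives the claimed size bound $3^{\lceil \log_2 n\rceil}(s + \tfrac{5}{2}) - \tfrac{5}{2}$.

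There is no real obstacle here: this is the direct analogue of the size recurrence $T(h) = 3T(h-1)+2$ that produced Corollary \ref{largenum}, and the only place the OR case differs is the additive constant ($5$ instead of $2$), which is exactly the origin of the $\tfrac{5}{2}$ appearing in the closed form. The minor thing to double-check is the treatment of $n$ that is not a power of $2$: one pads the tree with dummy leaves computing the constant $0$ (whose untidy computation has size $0$), so the size and width bounds still hold against the upper envelopes $s$ and $w$.
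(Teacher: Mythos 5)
Your proof is correct and matches the approach the paper leaves implicit: a balanced binary tree of OR-gadgets from Lemma~\ref{lem:block}, with the recurrence $T(h)=3T(h-1)+5$ solved in closed form, and width growing by two per level since subcircuits at the same depth share the original and ancilla wires. The padding observation for $n$ not a power of two is a reasonable way to make the induction uniform, though one could equally note that $2s_1+s_2\le 3\max(s_1,s_2)$ handles unbalanced splits directly.
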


\noindent{\bf The Reversible Circuit for $\phi = C_{1}\wedge \ldots\wedge C_{m}$.} We first create circuits ${\cal C}_{1},\ldots,{\cal C}_{m}$ that untidily compute clauses $C_{1},\ldots,C_{m}$,
using Corollary \ref{largenum2}. 
From these circuits, we then construct a reversible circuit that 
untidily computes their conjunction, using Corollary \ref{largenum}. 
The resulting circuit will untidily compute $\phi$ with the following parameters.

\begin{Corollary}\label{smallcor} Suppose $\phi$ has $n$ variables and $m$ clauses.  Then we can untidily compute $\phi$ using a reversible circuit of width at most $w+ 2(\lceil \log n \rceil + \lceil \log m \rceil ) $ and size 
at most $ 4\times 3^{\lceil \log_{2} n \rceil  + \lceil \log_{2} m \rceil} -1$.
\end{Corollary}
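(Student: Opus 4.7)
The plan is to apply the two composition corollaries in two nested stages, matching the two-level (clause, conjunction) structure of a SAT formula. The base objects are literals. A literal $l_{i,j}$ is either $x_{k}$ or $\neg x_{k}$, which can be untidily computed on the input wire itself using at most one $\nnot$ gate, so I take $s=1$ as a uniform size bound on literal circuits and $w=n$ as the width (the input register).

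For the inner stage, I view each clause $C_i = l_{i,1} \vee \cdots \vee l_{i,k_i}$ as the OR of at most $n$ literals. Applying Corollary~\ref{largenum2} with $s = 1$ yields a circuit ${\cal C}_i$ that untidily computes $C_i$ with size at most $3^{\lceil \log_2 n \rceil}(1 + \tfrac52) - \tfrac52 = \tfrac72 \cdot 3^{\lceil \log_2 n \rceil} - \tfrac52$, and width at most $n + 2\lceil \log_2 n \rceil$. Since all clauses share the same literal wires and ancilla format, this size bound $S := \tfrac72 \cdot 3^{\lceil \log_2 n \rceil} - \tfrac52$ and this width bound apply simultaneously to all of the ${\cal C}_i$.

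For the outer stage, I view $\phi = C_1 \wedge \cdots \wedge C_m$ as the AND of $m$ circuits, each of size at most $S$ and width at most $n + 2\lceil \log_2 n \rceil$. Applying Corollary~\ref{largenum} with this $S$ and $w + 2\lceil \log_2 n \rceil$ produces a circuit of width at most $n + 2(\lceil \log_2 n \rceil + \lceil \log_2 m \rceil)$ and size at most
\[
3^{\lceil \log_2 m \rceil}(S + 1) - 1 \;=\; 3^{\lceil \log_2 m \rceil}\bigl(\tfrac72 \cdot 3^{\lceil \log_2 n \rceil} - \tfrac32\bigr) - 1
\;\le\; 4 \cdot 3^{\lceil \log_2 n \rceil + \lceil \log_2 m \rceil} - 1,
\]
where in the last step I drop the negative term and use $\tfrac72 \le 4$. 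This is exactly the claimed bound.

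I do not expect a real obstacle here: the corollaries do all the heavy lifting, and everything reduces to tracking constants across the two tree-composition steps. The only subtlety worth being careful about is the choice of uniform base-case bounds $s$ and $w$ and ensuring that the ancilla layout used by Corollaries~\ref{largenum} and~\ref{largenum2} is consistent when the inner circuits share their input register; by construction, each ${\cal C}_i$ acts on the same $n$ input wires plus $2\lceil \log_2 n \rceil$ fresh ancillas that are returned to $0$ after the (untidy) output bit is produced, so the outer AND composition can legitimately treat them as identically-shaped black boxes.
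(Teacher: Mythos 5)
Your proof matches the paper's own (very terse) argument exactly: the paper likewise builds each clause circuit by applying Corollary~\ref{largenum2} to the OR of its (at most $n$) literals, and then applies Corollary~\ref{largenum} to the AND of the $m$ resulting clause circuits; your constant-tracking across the two stages reproduces the stated width and size bounds. One shared caveat worth noting: under the paper's formal definition of untidy computation, the output must appear on a dedicated wire initialized to $0$, and with the gate set $\{\toffoli,\cnot,\nnot\}$ the negated literal $\neg x_k$ then seems to require two gates (CNOT then NOT), i.e.\ $s=2$, which would yield $\tfrac92\cdot 3^{\lceil\log n\rceil+\lceil\log m\rceil}-\cdots$ and so overshoot $4\cdot 3^{\lceil\log n\rceil+\lceil\log m\rceil}-1$. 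Your device of taking the literal's output to live on the input wire itself (so $s\le 1$, $w=n$) is evidently what the paper also intends, since the target constant $4$ forces $s\le\tfrac32$; but strictly speaking this sidesteps the fresh-output-wire requirement in the definition, so you should either explicitly observe that the AND/OR gadgets of Lemma~\ref{lem:block} still compose correctly when the inner circuit's ``output'' is an input wire, or else accept the slightly larger constant.
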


Finally, the centerpiece Lemma~\ref{MAINlemma} then follows from applying Lemma \ref{weaktostrong} to the untidy circuit in Corollary~\ref{smallcor}.
\clearpage
\subsection{Reducing SAT to strong simulation} \label{reduction}

\begin{wrapfigure}{r}{4.5cm}
\vspace{-0.35in}
\begin{tabular}{lp{3.3cm}}
$\;$ & \includegraphics[width=3cm]{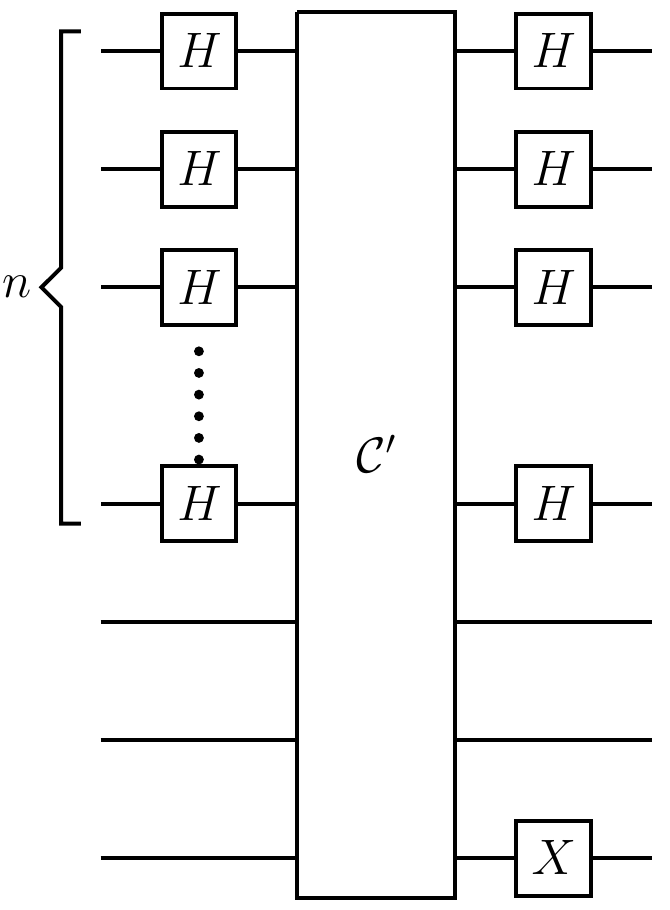}  \\
 & The construction of ${\cal C}_{\phi}$, where ${\cal C}'$ comes from 
 Section \ref{reversible}.
\end{tabular}
\end{wrapfigure}

Given a SAT instance $\phi$ with $n$ variables and $m$ clauses, we would like to construct a quantum circuit ${\cal C}_{\phi}$ so that 

\begin{equation}\label{sattocircuit}
\vartheta := \langle 0 \ldots 0 | {\cal C}_{\phi} |0\ldots 0\rangle = { |\,  \{ x\in \{0,1\}^{n}\;  |\;  \phi(x) = 1\}\, |  \over 2^{n}}.
\end{equation}

Let ${\cal C}'_{\phi}$ be a (classical) reversible circuit coming from Lemma \ref{MAINlemma} that tidily computes $\phi$.  Then the quantum circuit ${\cal C}_{\phi}$ 
on the right satisfies Equation \ref{sattocircuit}.


\subsection{Relating the parameters}\label{parameters}
By comparing the parameters of $\phi$ and ${\cal C}_{\phi}$, we can tie the
complexity of the approximate strong simulation problem to the complexity of the SAT problem.

\begin{proof}[Proof of Theorem~\ref{mainconditional}]
Suppose we had a strong simulator that could approximate $\vartheta$ to
within an additive error of $2^{-n}/2$. Then running the simulator on ${\cal C}_{\phi}$, we would be able to tell if $\phi$ is satisfiable or not: if $\vartheta < 2^{-n}/2$ then $\phi$ is not satisfiable, otherwise it is.

Let the number of qubits of ${\cal C}_{\phi}$ be $n' = w({\cal C}_{\phi})$ and its size be $s = s({\cal C}_{\phi})$.
Suppose we could run this simulator in time $2^{(1-c)n'} s^{O(1)}$ for some $c>0$. Then,
substituting the bounds for $w({\cal C}_{\phi})$ and $s({\cal C}_{\phi})$ in Lemma~\ref{MAINlemma},
we obtain a running time of
\[
2^{(1-c)(n + 2\log n + 2\log m)} (nm)^{\log_{2} 3} = 2^{(1-c)n} (mn)^{O(1)}.
\]
This would contradict the SETH.
\end{proof}

\begin{proof}[Proof of Theorem~\ref{sharpest}]
  Suppose that our simulator can approximate amplitude $\vartheta$ to within accuracy $2^{-n}/2$ and in time
${2^{n} \over 2^{n/o(\log m/n)}} $. Then this simulator would solve the SAT problem in time
\[
{2^{n + 2\log n + 2\log m}\over 2^{n/o(\log m/n)  } }
\]
which is better than  ${2^n \over 2^{n/O(\log m/n)}}$, beating the currently known best SAT solver \cite{calabro2006duality}.
\end{proof}

\noindent Note that a simple padding argument further extends Theorem~\ref{mainconditional} to the following.

\begin{Corollary}
Assume the SETH and let $0<\alpha \leq 1$.  Then any strong simulator with approximation precision $2^{-\alpha n}/2$ must take $2^{\alpha n - o(n)}$ time.
\end{Corollary}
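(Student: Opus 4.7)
The plan is to obtain the Corollary by a direct padding reduction to Theorem~\ref{mainconditional}. Suppose for contradiction that, for some fixed $\delta > 0$, there is a strong simulator $\mathcal{A}$ that, on $N$-qubit circuits, achieves precision $2^{-\alpha N}/2$ in time $T(N) \le 2^{(\alpha - \delta) N}$ infinitely often. I will use $\mathcal{A}$ to beat SETH on SAT by blowing up a SAT instance into a quantum circuit whose width is enlarged by a factor of $1/\alpha$, so that the ``$\alpha n$'' in the simulator's budget translates back to ``$n$'' in the SAT problem.

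First, given a SAT instance $\phi$ on $n$ variables with $m = \mathrm{poly}(n)$ clauses, I invoke the construction from Sections~\ref{rereversible}--\ref{reduction} to obtain the quantum circuit $\mathcal{C}_\phi$ of width $n_0 = n + O(\log nm)$ whose $\langle 0 \cdots 0 | \mathcal{C}_\phi | 0 \cdots 0 \rangle$ equals $\vartheta = |\{x : \phi(x)=1\}|/2^{n}$. Next, I pad $\mathcal{C}_\phi$ to a circuit $\widetilde{\mathcal{C}_\phi}$ of width $N := \max\{n_0, \lceil n/\alpha \rceil\}$ by adjoining $N - n_0$ fresh ancilla wires initialized to $|0\rangle$ that are untouched by any gate; this neither changes the target amplitude nor inflates the gate count, and $\widetilde{\mathcal{C}_\phi}$ still has only $\mathrm{poly}(n)$ operations. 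Since $\alpha N \ge n$, the simulator's precision $2^{-\alpha N}/2 \le 2^{-n}/2$ is fine enough to distinguish $\vartheta = 0$ from $\vartheta \ge 2^{-n}$, so running $\mathcal{A}$ on $\widetilde{\mathcal{C}_\phi}$ decides satisfiability of $\phi$.

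Finally, I bound the running time. For $n$ large enough that $N = \lceil n/\alpha \rceil$, the assumed complexity of $\mathcal{A}$ yields
\[
T(N) \;\le\; 2^{(\alpha - \delta) N} \;\le\; 2^{(\alpha - \delta)(n/\alpha + 1)} \;=\; 2^{\,(1 - \delta/\alpha)\, n + O(1)}.
\]
Multiplying by the $\mathrm{poly}(n,m)$ classical overhead of constructing $\widetilde{\mathcal{C}_\phi}$, the total time for SAT becomes $2^{(1 - \delta/\alpha) n}\,\mathrm{poly}(n,m)$. Because $0 < \alpha \le 1$ and $\delta > 0$, the exponent $1 - \delta/\alpha$ is a constant strictly less than $1$, directly contradicting the SETH. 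Contraposing, any such simulator must run in time $2^{\alpha n - o(n)}$.

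The only real subtlety is the quantifier handling around the $o(n)$ term: one must phrase the negation as ``there exists a fixed $\delta > 0$ with $T(N) \le 2^{(\alpha-\delta)N}$ on an infinite set of widths,'' so that the same $\delta$ survives the change of variable $N = \lceil n/\alpha \rceil$ and produces a constant-factor exponential improvement for SAT. Beyond that bookkeeping, the padding is routine: because ancilla wires carrying $|0\rangle$ with no gates contribute only the scalar factor $\langle 0 | 0 \rangle = 1$ to the amplitude, the whole argument reduces mechanically to Theorem~\ref{mainconditional} applied at the enlarged width $N$.
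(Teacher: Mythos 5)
Your proof is correct and takes exactly the route the paper has in mind: the paper simply remarks that ``a simple padding argument further extends Theorem~\ref{mainconditional}'' without spelling it out, and your padding construction — adjoining $N - n_0$ idle wires so that $\alpha N \ge n$ and hence $2^{-\alpha N}/2 \le 2^{-n}/2$, then running the change of variable $N = \lceil n/\alpha \rceil$ through the time bound — is precisely that argument.

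One small point worth tightening: for $\alpha = 1$ one has $\lceil n/\alpha\rceil = n < n_0$, so $N = n_0$ rather than $\lceil n/\alpha\rceil$; your estimate still goes through in that case (it reduces to Theorem~\ref{mainconditional} directly), but the phrase ``for $n$ large enough that $N = \lceil n/\alpha\rceil$'' should be read as covering $\alpha < 1$ only. Also, as you yourself note, the ``infinitely often'' negation of the $o(n)$ bound only yields an infinite set of good widths $N$; strictly speaking one wants those widths to be hit by the map $n \mapsto \lceil n/\alpha\rceil$. Since that map is nondecreasing with bounded gaps and one may pad to the nearest good width at bounded extra cost, this is harmless, but it is worth saying so explicitly if the argument is written out in full.
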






\section{Conclusion} 

\subsection{Summary}
In this paper, we proved explicit lower bounds for existing strong simulation methods.  Almost all prominent simulation methods fall into the class of monotone methods, which only focus on the locations of non-zero entries (i.e.\ the \emph{skeleton} of the tensor network) and are oblivious and monotone with respect to the nonzero coefficients. We then reduce the problem of computing the permanent using monotone arithemetic circuits to evaluating a quantum circuit using monotone methods. Therefore, the hardness of the former implies the hardness of the latter.

For general strong simulators, we showed that they can be harnessed to solve the \#SAT problem if the simulation method is of a certain accuracy. The \#SAT problem is not only \#P-hard; it is widely believed that solving \#SAT takes about $2^n$ time under the Strong Exponential Time Hypothesis. This allows us to prove an \emph{explicit} conditional lower bound on all strong simulation methods with high accuracy.

Our lower bounds focus on general quantum circuits, showing that there exist hard instances with reasonable size that cannot be evaluated efficiently using current simulation methods. However, such reductions might not be applicable if the circuit is drawn from a restricted class (for example, circuits consisting of only Clifford gates). Proving that a certain restricted class of quantum circuits admits a hard instance requires additional effort. We also suspect that most quantum circuits are hard to simulate in a stronger sense, but we leave this consideration to future work.

\subsection{Open problems}

\noindent{\bf Hardness of Random Instances.} Both of our results focus on the worst-case complexity of strong simulation by artificially constructing hard instances that cannot be simulated efficiently. For example, we reduce to the permanent to apply a known unconditional lower-bound to a particular closed skeleton. 
While we can probably show that \emph{many} closed skeletons are hard by
manipulating the proof of \cite{jerrum1982some}, proving average-case hardness might require a worst- to average-case reduction, similar to~\cite{bouland2018quantum}.

\medskip

\noindent{\bf Space-Efficient Algorithms.} In this paper, we have considered only the time-complexity of a simulator.  More generally, one may be interested in restricting to space-efficient simulators. Along this line, there is the Feynman path-integral which has time complexity $O(2^{nd})$.  More recently, Aaronson and Chen~\cite{aaronson2017complexity} use Savitch's Theorem to show that one can achieve an $O(d^{n})$ time-complexity. Does there then exist an even faster space-efficient strong simulator with time-complexity $O(d \cdot 2^{n})$?  Can we fine-tune such a simulator to achieve a general time-space tradeoff, as in~\cite{aaronson2017complexity}? 

\medskip

\noindent{\bf Larger Additive Gaps.} Theorem~\ref{mainconditional} addresses the hardness of strong simulation up to accuracy $O(2^{-n})$, but it could be that strong simulation up to accuracy $O(2^{-{n/2}})$ would also take $2^{n-o(n)}$ time. Approximation up to accuracy $2^{-{n/2}}$ is particularly interesting because $2^{-{n/2}}$ is the typical value of an amplitude over a randomly chosen basis vector.
A more general question is determining, for some $a < b$, the complexity of deciding whether an amplitude is at most $a$ or at least $b$. 
An avenue for proving such results may come from the hardness of approximating the \#SAT problem. 

\medskip

\noindent{\bf Lower Bounds on Weak Simulators.} Can we prove explicit lower bounds for weak simulators? It is well known that there can be no efficient weak simulator unless $PH$ collapses to the third level. However, it is plausible that some weak simulators may run in sub-exponential (e.g. $2^{\sqrt{n}}$) time, which may be affordable in the near future.

\medskip

\noindent{\bf Superior Weak Simulators.} Given the compelling evidence that strong simulation is hard, can we design a weak simulator that runs general quantum circuits in time $o(d \cdot 2^{n})$?  The simulator constructed by Bravyi and Gosset~\cite{bravyi2016improved} is an excellent example of the potential gains afforded by weak (versus strong) simulation.


\section{Acknowledgments}

The authors would like to thank Jianxin Chen, Yaoyun Shi, Fang Zhang, and Avi Widgerson for helpful discussions.  In particular, we thank Yaoyun for first posing the question of space-efficient simulation with time-complexity $O(d \cdot 2^{n})$, out of which these ideas grew.  This work was supported by the Alibaba group.


\bibliographystyle{alpha}
\bibliography{biblio}

\end{document}